\DeclareMathAlphabet{\mathpzc}{OT1}{pzc}{m}{it}
\setlist{nolistsep}
\newcolumntype{C}[1]{>{\centering\let\newline\\\arraybackslash\hspace{0pt}}m{#1}}
\theoremstyle{plain}
\newtheorem{theorem}{Theorem}
\newtheorem{definition}{Definition}
\newtheorem{lemma}{Lemma}
\newtheorem*{definition*}{Definition}
\newtheorem*{corollary*}{Corollary}
\newtheorem{proposition}{Proposition}
\def \marg {\textrm{marg}}
\def \beq{\begin{eqnarray*}}
\def\eeq{\end{eqnarray*}}
\begin{document}
\title{Robust communication on networks}
\date{\today}
\thanks{The authors gratefully acknowledge the support of the Agence Nationale pour la Recherche under grant ANR CIGNE (ANR-15-CE38-0007-01) and through ORA Project ``Ambiguity in Dynamic Environments'' (ANR-18-ORAR-0005). Laclau gratefully acknowledges the support of the ANR through the program Investissements d'Avenir (ANR-11-IDEX-0003/Labex Ecodec/ANR-11-LABX-0047). }
\author{Marie Laclau}
\address{Marie Laclau, HEC Paris and GREGHEC-CNRS, 1 rue de la Lib\'eration, 78351 Jouy-en-Josas, France}
\email{laclau(at)hec.fr}
\author{Ludovic Renou}
\address{Ludovic Renou, Queen Mary University of London and University of Adelaide, Miles End, E1 4NS, London, UK}
\email{lrenou.econ(at)gmail.com}
\author{Xavier Venel}
\address{Xavier Venel, LUISS Guido Carli University, 32 Viale Romania, 00197 Rome, Italy}
\email{xvenel(at)luiss.it}

\begin{abstract}
We consider sender-receiver games, where the sender and the receiver are two distinct nodes in a communication network. Communication between the sender and the receiver is thus indirect. We ask when it is possible to \emph{robustly} implement the equilibrium outcomes of the  \emph{direct} communication game as equilibrium outcomes of \emph{indirect} communication games on the network. Robust implementation requires that: \emph{(i)} the implementation is independent of the preferences of the intermediaries and \emph{(ii)} the implementation is guaranteed  at all histories consistent with unilateral deviations by the intermediaries. We show that robust implementation of direct communication is possible if and only if either the sender and receiver are directly connected or there exist two disjoint paths between the sender and the receiver. We also show that having two disjoint paths between the sender and the receiver guarantees the robust implementation of all \emph{communication equilibria} of the direct game. We use our results to reflect on organizational arrangements.

 \medskip \noindent \textsc{Keywords}: Cheap talk, direct, mediated, communication, protocol, network.

\smallskip \noindent \textsc{JEL Classification}: C72; D82.
\end{abstract}

\maketitle

\newpage

\section{Introduction}

In large organizations, such as public administrations, governments, armed forces and multinational corporations, information typically flows through the different layers of the organization, from engineers, sale representatives, accountants to top managers and executives. Communication is indirect. While indirect communication is necessary in large organizations -- each of the 352, 600 employees of IBM cannot directly communicate with its CEO -- it may harm the effective transmission of valuable information.  Indeed, as the objectives of members of an organization are rarely perfectly aligned, distorting, delaying or even suppressing the transmission of information are natural ways, among others, for members of the organization to achieve their own objectives.\footnote{Companies like Seismic, an  enterprise level software technology provider, specialize in helping teams within organizations to  work better together. See https://hbr.org/2017/02/how-aligned-is-your-organization, ``How Aligned is Your Organization?'' Harvard Business Review, Jonathan Trevor and Barry Varcoe, February 2017. } Does it exist organizational arrangements, which mitigate these issues?  The main insight of this paper is that there is: \emph{matrix organization}.\footnote{We do not claim that matrix organization was designed with this goal in mind; it is rather a consequence, perhaps even unintended, of its design.}\medskip  

Matrix organization (or management) consists in organizing activities along more than one dimension, e.g., function (marketing, accounting, engineering, R\&D, etc), geography (US, Europe, Asia, etc) or products. From the early 60's to the present, large corporations such as NASA, IBM, Pearson, Siemens and Starbucks have adopted this mode of management. A central feature of matrix organization is  \emph{multiple} reporting, that is, low-level employees report to multiple independent managers. E.g., an engineer working on a project to be implemented in Europe will report to the manager of the engineering division, to the manager of the project  and to the manager of the European division. In other words, the information can flow via independent channels. As we shall see, this is the main reason why matrix organization facilitates the effective communication of valuable information. (See Baron and Besanko, 1996, and Harriv and Raviv, 2002, for economic models of matrix organization and Galbraith, 2009,  and Schr\"otter, 2014, for qualitative analysis.) \medskip 

To study the issue of communication in organizations, we consider sender-receiver games on communication networks, where the communication network models the reporting lines within the organization.\footnote{We voluntarily abstract  from a host of other organizational issues, such as team management, remuneration, ownership, or internal capital market.} The main theoretical question this paper addresses is whether it is possible to \emph{robustly} implement the equilibrium outcomes of the  \emph{direct} communication game as equilibrium outcomes of \emph{indirect} communication games on the network.  Robust implementation imposes two requirements: \emph{(i)} the implementation is independent of the preferences of the intermediaries, and \emph{(ii)} the implementation is guaranteed at all histories consistent  with (at most) one intermediary deviating at any stage of the communication game. (Different intermediaries can deviate at different stages.) The first requirement guarantees that the communication is robust to the potential mis-alignment of preferences within the organization. The second requirement  guarantees that the organization can tolerate some mistakes, errors  or even deliberate disruptions in communication. In sum, our two requirements insure effective communication within the organization.\medskip

We assume throughout that the sender and the receiver are not directly connected. (If they are, the problem is trivial.) We prove that the robust implementation of direct communication is possible if, and only if, there are (at least) two disjoint paths of communication between the sender and the receiver. The condition is clearly necessary. Indeed, if it is not satisfied, then all communication between the sender and the receiver must transit through a fixed intermediary (a cut of the graph). This intermediary controls all the flow of information and, therefore, can disrupt the implementation of all equilibrium outcomes of the direct communication game, which are unfavorable to him. To illustrate the difficulties we face, consider the network in Figure \ref{fig:3connexe-intro}.

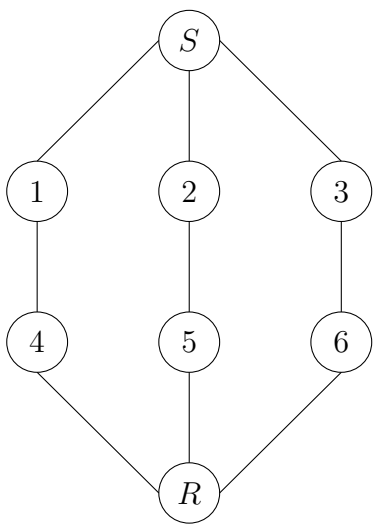
\begin{figure}[h!]
\centering
%
%
%
%
%
%
%
%
%
%
%
%
%
%
%
%
 \begin{tikzpicture}

\draw (4,0) circle [y radius=0.4, x radius=0.4];
\draw (4,2) circle [y radius=0.4, x radius=0.4];
\draw (4,4) circle [y radius=0.4, x radius=0.4];
\draw (4,6) circle [y radius=0.4, x radius=0.4];

\draw (6,2) circle [y radius=0.4, x radius=0.4];
\draw (6,4) circle [y radius=0.4, x radius=0.4];

\draw (2,2) circle [x radius=0.4, y radius=0.4];
\draw (2,4) circle [x radius=0.4, y radius=0.4];

\draw (3.6,0)--(2,1.6);
\draw (2,2.4)--(2,3.6);
\draw(6,3.6)--(6,2.4);
\draw(6,1.6)--(4.4,0); 

\draw(3.6,6)--(2,4.4); 
\draw(4.4,6)--(6,4.4); 

\draw(4,5.6)--(4,4.4); 
\draw(4,3.6)--(4,2.4);
\draw(4,1.6)--(4,0.4);  

\node[] at  (4,0) {$R$};
\node[] at (4,6)  {$S$};
\node[] at  (2,4) {$1$};
\node[] at  (4,4) {$2$};
\node[] at (6,4) {$3$}; 
\node[] at  (2,2) {$4$};
\node[] at  (4,2) {$5$};
\node[] at (6,2) {$6$};

\end{tikzpicture}
\caption{Illustration of the difficulties}
\label{fig:3connexe-intro}
\end{figure}

There are three disjoint paths from the sender to the receiver, so it is tempting to use a majority argument. That is:  to have the sender transmit his message to intermediaries $1$, $2$ and $3$, and to have all intermediaries forward their messages. If the intermediaries are obedient, then the receiver obtains three identical copies of the message sent and indeed learns it. Suppose now that the sender wishes to transmit the message $m$. If  intermediary 1 reports $m' \neq m$ at the first stage and  intermediary $5$ reports $m'' \neq m $ at the second stage, the receiver  then faces the profile of reports $(m',m'',m)$.  Thus, we need the receiver to decode it as $m$. However, the receiver would receive the same profile of reports $(m',m'',m)$ when the sender wishes to transmit the message $m'$, intermediary 3 reports $m$ and intermediary $5$ reports $m''$. Since the receiver would decode it as $m$, he would learn the wrong message. Such a simple strategy does not work in general.\footnote{This simple  strategy works if there are enough disjoint paths. In that example, we need two additional disjoint paths.}  The main difficulty we have to deal with is that different intermediaries deviate at different stages.\medskip

 The communication protocol we construct  requires several rounds of communication and rich communication possibilities. Players must be able to \emph{broadcast} messages to any subset of their neighbors. Broadcasting a message to a group insures common knowledge of the message among the group's members. It is a very natural assumption: face-to-face meetings, online meetings via platforms like Zoom or Microsoft Teams, Whatsapp groups, all makes it possible to broadcast to a group. To sum up, in large organizations, where it is difficult to align the preferences of all, dual lines of reporting facilitate the flow of information throughout the organization and limit the ability of managers to distort reports to their own advantages. However, this comes at a cost: frequent meetings and emailing. For instance, our protocols requires 28 rounds of communication in the illustrative example and more than a hundred meetings! The tendency of matrix organization to generate countless meetings was already noticed in the late 70's. 
\begin{quotation} 
\textit{Top managers were spending more time than ever before in meetings or in airplanes taking them to and from meetings.} (McKinsey Quarterly Review, ``Beyond matrix organization.'' September 1979.\footnote{https://www.mckinsey.com/business-functions/organization/our-insights/beyond-the-matrix-organization\#})  
\end{quotation}

\medskip

We derive another important result. Besides replicating the equilibrium outcomes of the direct communication game, the communication games we construct may have additional equilibrium outcomes. By construction, these equilibrium outcomes are equilibrium outcomes of the mediated  communication game between the sender and receiver, that is, when the communication between the sender and the receiver is intermediated by a mediator. We show that we can replicate \emph{all} equilibrium outcomes of the mediated communication game between the sender and the receiver if, and only if, there are two disjoint paths between the sender and the receiver. Precluding direct communication may therefore help decision makers in achieving better outcomes. 
\medskip

\textit{\textbf{Related literature.}} This paper is related to several strands of literature. The commonality between these strands of literature is the construction of protocols to \emph{securely} transmit a message from a sender to a receiver  on a communication network. The secure transmission of a message requires that (i) the receiver correctly learns the sender's message, and (ii)  intermediaries do not obtain additional information about the  message, while executing the protocol. Reliability (or resiliency) refers to the first requirement, while secrecy refers to the second. 

To start with, there is a large literature in computer science, which studies the problem of secure  transmission of messages on communication networks. See, among others, Beimel and Franklin (1999), Dolev et al. (1993), Linial (1994), Franklin and Wright (2004),  Renault and Tomala (2008), and Renault and al. (2014). This literature provides conditions on the topology of the communication networks for the secure transmission of a message from a sender to a receiver. See Renault and al. (2014) for a summary of these results. An important assumption of all this literature is  that the adversary controls a \emph{fixed} set of nodes throughout the execution of the communication protocols. The adversary we consider is stronger in that it can control different sets of nodes at each round of communication. To the best of our knowledge, such an adversary has not been studied in the computer science literature. However, we restrict attention to singletons, while the computer science literature considers larger sets.

This paper  is also related to the literature on repeated games on networks, where the network models the monitoring structure and/or the communication possibilities. See, among others, Ben-Porath and Kahneman (1996), Laclau (2012, 2014),  Renault and Tomala (1998), Tomala (2011) and Wolitzky (2015). This literature characterizes the networks for which folk theorems exist. An essential step in obtaining a folk theorem  is the construction of protocols, which guarantee that upon observing a deviation,  players start a punishment phase. To do so, when a player observes a deviation, he must be able to securely transmit the message ``my neighbor has deviated'' to all other players. None of these papers have studied the issue of \emph{robustly} communicating the identity of the deviator on \emph{communication networks}. Either the communication is restricted to a network, in which case it is not robust (Laclau, 2012; Renault and Tomala, 1998; Tomala, 2011; Wolitzky, 2015),  or the communication is unrestricted, direct and immediate (Ben-Porath and Kahneman, 1996; Laclau, 2014).\footnote{In the latter case, the network models the structure of observation and/or interaction, but not the communication possibilities. }

 This paper is also related to the literature  on mediated and unmediated communication in games. See, among others, Barany (1985), Ben Porath (1998), Forges (1986, 1990), Forges and Vida (2013), Gerardi (2004), Myerson (1986), Renault and Tomala (2004), Renou and Tomala (2012) and Rivera (2017). In common with this literature, e.g., Barany (1985), Forges (1990), Forges and Vida (2013), or Gerardi (2004), we show that we can emulate mediated communication with unmediated communication. The novelty is that we do it on communication networks, albeit for a particular class of games, i.e., sender-receiver games.

Finally, this paper is related to the large literature on cheap talk games, pioneered by Crawford and Sobel (1984) and Aumann and Hart (1993, 2003). See Forges (2020) for a recent survey. The closest paper to ours is Ambrus et al. (2014). These authors consider simple communication networks (perfectly hierarchical networks) and restrict attention to a particular class of games. Their emphasis is complementary to ours. We ask when it is possible to robustly implement all equilibrium outcomes of the direct (unmediated and mediated) communication game on a communication network, while they ask what are the equilibrium outcomes of their fixed game.

\section{Robust communication}
\subsection{The problem}  We start with some mathematical preliminaries. Unless indicated otherwise, all sets $X$ are  complete separable metric spaces, endowed with their Borel $\sigma$-algebra $\mathbb{B}_X$. We write $\Delta(X)$ for the set of probability measures on $X$.   Let $X$ and $Y$ be two complete separable metric spaces. A probability kernel is a function $f: Y \times \mathbb{B}_X \rightarrow [0,1]$ such that (i) for all $y \in Y$,  $f(y,\cdot): \mathbb{B}_X \rightarrow [0,1]$ is a probability measure, and (ii) for all $B \in \mathbb{B}_X$,  $f(\cdot, B):Y \rightarrow [0,1]$ is measurable. Throughout, we abuse notation and write $f: Y \rightarrow \Delta(X)$ for the  probability kernel $f: Y \times \mathbb{B}_X \rightarrow [0,1]$. \medskip

There are a sender and a receiver, labelled $S$ and $R$, respectively. The sender knows a payoff-relevant state $\omega \in \Omega$, with $\nu\in \Delta(\Omega)$ the prior probability.   The receiver takes an action $a \in A$. For all $i \in \{S,R\}$, player $i$'s payoff function is 
$u_i : A \times \Omega \rightarrow \mathbb{R}$, which we assume to be measurable. 
\medskip 

\textit{\textbf{Direct communication}.} Consider the direct communication game, where the sender can directly communicate with the receiver, that is, the sender can send a message $m \in M$ to the receiver, prior to the receiver choosing an action $a \in A$. In the direct communication game, a strategy for the sender is a map $\sigma: \Omega \rightarrow \Delta(M)$, while a strategy for the receiver is a map $\tau: M \rightarrow \Delta(A)$. We denote $\mathcal{E}^d \subseteq \Delta (\Omega \times A)$ the set of (Nash) equilibrium distributions over states and actions of the direct communication game. Note that we may have $\mathcal{E}^d = \emptyset$.

\medskip

\textit{\textbf{Indirect communication}.} To model indirect communication, we assume that the sender and the receiver are two distinct nodes on an (undirected) network $\mathcal{N}$. The set of nodes, other than $S$ and $R$, is denoted $I$, which we interpret as a set of $n$ intermediaries. Communication between the sender and the receiver transits through these intermediaries.  We let $\mathcal{N}_i$ be the set of neighbors of $i \in I^*:=I \cup \{S,R\}$ in the network. Throughout, we assume that the sender and the receiver are not directly connected in the network $\mathcal{N}$.\medskip

\textbf{\textit{Communication game on a network.}} A communication game on the network $\mathcal{N}$ is a multi-stage game with $T < \infty$ stages, where at each stage players send costless messages to their neighbors and the receiver decides either to take an action (and stop the game) or to continue communicating. 

We first define \emph{communication mechanisms}, denoted $\boldsymbol{\mathcal{M}}$, as the sets of messages players can send to each others.   We allow for a rich set of communication possibilities. Communication can be private, e.g., private emails or one-to-one meetings, or public, e.g., emails sent to distribution lists or group meetings, or a mixed of both. We say that player $i$ {\em broadcasts} a message to a (non-empty) subset of his neighbors $N \subseteq \mathcal{N}_i$ if (i) all players in $N$ receive the same message, and (ii) it is a common belief among all players in $N$ that they have received the same message (in other words, the list of recipients of the message is certifiable among them). Face-to-face group meetings, online meetings via platforms like Zoom or Microsoft Teams, Whatsapp groups, all make it possible to broadcast to a group.

Communication unfolds as follows: at each stage $t$, players \emph{broadcast} messages to all possible (non-empty) subsets of neighbors. The set of messages player $i$ can broadcast to the subset of neighbors $N \in 2^{\mathcal{N}_i}\setminus \{\emptyset\}$ is  $\mathcal{M}_{iN}$. Let $\mathcal{M}_i=\prod_{N \in 2^{\mathcal{N}_i}\setminus \{\emptyset\}}\mathcal{M}_{iN}$ be the set of messages available to player $i$, and $\mathcal{M}=\prod_{i \in  I^*} \mathcal{M}_i$ the set of messages available to all players. Few remarks are worth making. First, private messages correspond to broadcasting to singletons.  Players can thus send private messages to their neighbors. Second, the sets of messages available to a player are independent of both time and past histories of messages sent and received. The latter is with loss of generality. However, without such an assumption, the model has no bite. Indeed, if the only message  player $i$ can transmit upon receiving the message $m$ is the message $m$ itself, an extreme form of history dependence, then we  trivially reproduce  direct communication with indirect communication. 

Finally, we need two additional elements to obtain the communication game from the communication mechanism. First,  we assume that at each stage, the receiver can either take an action $a \in A$ or continue communicating, in which case he sends a message $m_R \in \mathcal{M}_R$. If the receiver takes the action $a$, the game stops. Second, we need to associate payoffs with terminal histories. The payoff to player $i \in I^*$ is $u_i(a,\omega)$ when the state is $\omega$ and the receiver takes action $a$, with $u_i: A \times \Omega \rightarrow \mathbb{R}$ a measurable function. (If the receiver never takes an action, the payoff to all players is $- \infty$.) Thus, communication is purely cheap talk. We denote $\Gamma(\boldsymbol{\mathcal{M}}, \mathcal{N})$ the communication game induced by the mechanism $\boldsymbol{\mathcal{M}}$ on the network $\mathcal{N}$.\bigskip

\textbf{\textit{Strategies and equilibrium.}} A history of messages received and sent by player $i$ up to (but not including) period $t$ is denoted $h_i^t$, with $H_i^t $ the set of all such histories.  A (pure) strategy for player $i \in I $ is a collection of maps $\sigma_i = (\sigma_{i,t})_{t \ge 1}$, where at each stage $t$, $\sigma_{i,t}$ maps $H_i^t$ to $ \mathcal{M}_{i}$. A (pure) strategy for the sender is a collection of maps $\sigma_S = (\sigma_{S,t})_{t \ge 1}$, where at each stage $t$, $\sigma_{S,t}$ maps $\Omega \times H_S^t$ to $ \mathcal{M}_{S}$. A (pure) strategy for the receiver is a collection of maps $\sigma_R= (\sigma_{R,t})_{t \ge 1}$, where at each stage $t$, $\sigma_{R,t}$ maps  $H_R^t$ to $ \mathcal{M}_{R} \cup A$. With a slight abuse of notation, we use the same notation for behavioral strategies.  Let  $H^t=\times_{i \in I^*} H_i^t$ . We write $\mathbb{P}_{\sigma}(\cdot|h^t)$ for the distribution over terminal histories and states induced by the strategy profile $\sigma= (\sigma_S, \sigma_R, (\sigma_i)_{i \in I})$, conditional on  the history $h^t \in H^t$. We write $\mathbb{P}_{\sigma}$ for the distribution, conditional on the initial (empty) history.\medskip

We now define what we mean by ``consistent with unilateral deviations.'' Fix a strategy profile $\sigma$. We define $\Sigma({\sigma})$ as the set of strategy profiles such that $\sigma' \in \Sigma({\sigma})$ if, and only if, there exists a sequence of intermediaries $(i_1,\dots,i_t,\dots)$ such that $\sigma'_t=(\sigma'_{i_t,t},\sigma_{-i_t,t})$ for all $t$. Thus, $\Sigma(\sigma)$ consists of all strategy profiles consistent with at most one intermediary deviating at each stage.  Note that $(\sigma'_i,\sigma_{-i}) \in \Sigma(\sigma)$ for all $\sigma_i'$, for all $i \in I$ since we do not assume that $i_t' \neq i_t$. The same intermediary can deviate multiple times. We let $\mathcal{H}(\sigma)$ be the set of terminal histories consistent with $\Sigma({\sigma})$, that is, $h \in \mathcal{H}(\sigma)$ if there exists $\sigma' \in \Sigma({\sigma})$ such that $h$ is in the support of $\mathbb{P}_{\sigma'}$. Similarly, we let $\Sigma^*({\sigma})$ be the set of strategy profiles consistent with at most one player $i \in I^*$ deviating from $\sigma$ at each stage, thus $\Sigma^*(\sigma)$ also includes deviations by the sender and the receiver. Note that $\sigma \in \Sigma(\sigma) \subset \Sigma^*(\sigma)$.  \medskip

We are now ready to define the concept of \emph{robust communication} on a network. We first start with an informal description. We say that the robust implementation of direct communication is possible on a network if, regardless of the preferences of the intermediaries, we can construct a communication game on the network with the property that for any distribution over actions and states of the direct communication game,  there exists an equilibrium of the communication game which replicates that distribution, not only on the equilibrium path, but also at all paths consistent with at most one intermediary deviating at each stage. In other words, the implementation is not only robust to the preferences of the intermediaries, but also to unilateral deviations.

\begin{definition}[Robust implementation of direct communication on the network $\mathcal{N}$] \label{def:robust-direct}
Robust implementation of direct communication is possible on the network $\mathcal{N}$  if there exists a communication mechanism $\boldsymbol{\mathcal{M}}$ on $\mathcal{N}$ such that for all utility profiles of the intermediaries $(u_i)_{i\in I}$,  for all distributions $\mu \in  \mathcal{E}^d$ of all direct communication games, there exists a Nash equilibrium $\sigma$ of $\Gamma(\boldsymbol{\mathcal{M}}, \mathcal{N})$ satisfying: 
\[\forall \sigma' \in \Sigma(\sigma),\: \: \: \:   \marg_{A \times \Omega} \mathbb{P}_{\sigma'}= \mu.\]
\end{definition}


\subsection{Discussion} 
In modern organizations, most employees, from top-executives to low-level managers, devote a significant fraction of their time to internal communication: they draft and circulate memos, attend and call meetings, write e-mail, etc. The network $\mathcal{N}$ captures these communication possibilities, particularly  who can call a meeting with whom.  If there is a link between players $i$ and $j$ and between players $i$ and $k$, player $i$ can communicate with players $j$ and $k$ both privately (face-to-face meetings) and publicly (group meetings). In organizations, meetings serve several functions, from communicating information to making decisions through generating ideas. The former, i.e., meetings as information forum, is the closest to the role meetings play in our analysis. When player $i$ broadcasts a message to players $j$ and $k$, player $i$ informs players $j$ and $k$.\medskip

We stress that if the robust implementation of direct communication is possible on the network $\mathcal{N}$, then the implementation is not only robust to the preferences of the intermediaries, but also to the underlying communication game. In other words, the very same network of communication $\mathcal{N}$ allows for the robust implementation of \emph{any} equilibrium outcome of \emph{any} direct communication game. This property is important. It implies that, within an organization, the reporting lines do not need to change with the challenges the organization faces. \medskip

Also, the implementation is robust to at most one intermediary deviating at every stage of the indirect communication game. While we cannot hope for the implementation to be robust to any number of deviations at every stage, this is still stronger than imposing that the same intermediary deviates at every stage, as equilibrium analysis would do. \medskip

Finally, the solution concept is Nash equilibrium. As we explain later, it is possible to define belief systems to guarantee the sequential rationality of the equilibria we construct.  All our results remain valid with the concept of (weak) perfect Bayesian equilibrium. We do not explicitly consider refinements of the concept of Nash equilibrium as most refinements do not apply to games with continuous action spaces and, most importantly, this would add nothing to our analysis. None of our analysis requires irrational behaviors off the equilibrium path to sustain the equilibrium. Instead, our analysis relies on indifferences. Robust implementation necessitates that players are made indifferent between deviating or not. (If not, an adversarial intermediary would always benefit from deviating.)

\section{Robust communication}

This section states our results and provide some intuition. All proofs are relegated to the Appendix. We start with our main contribution. 

\subsection{The main result}

\begin{theorem}\label{theo1} Robust implementation of direct communication is possible on the network $\mathcal{N}$ if, and only if, the network $\mathcal{N}$ admits two disjoints paths between the sender and the receiver. 
\end{theorem}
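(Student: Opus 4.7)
The plan is to treat the two implications by very different means: necessity reduces to Menger's theorem together with an adversarial direct game, while sufficiency requires an explicit multi-round broadcast protocol whose correctness rides on a stage-by-stage invariant.

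\emph{Necessity.} If no two disjoint $S$--$R$ paths exist then, since $S$ and $R$ are not adjacent, Menger's theorem gives a single intermediary $i^{\star}\in I$ whose removal disconnects them. Every message reaching $R$ therefore factors through $i^{\star}$'s choices, so conditional on $\omega$ the law of $R$'s action is determined by $i^{\star}$'s strategy. To exploit this I would fix a direct game with $\Omega=\{\omega_0,\omega_1\}$, $A=\{a_0,a_1\}$ and perfectly aligned sender--receiver payoffs pinning down a fully revealing $\mu\in\mathcal{E}^d$, and endow $i^{\star}$ with adversarial preferences $u_{i^{\star}}$ strictly preferring mismatched $(a,\omega)$ pairs. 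In any candidate equilibrium $\sigma$ of any mechanism $\boldsymbol{\mathcal{M}}$ with $\marg_{A\times\Omega}\mathbb{P}_\sigma=\mu$, the conditional law of $R$'s action given $\omega$ equals $\delta_{a_\omega}$, so $i^{\star}$'s on-path messages must be state-informative; the \emph{relabelling deviation} in which $i^{\star}$ plays, at every information set, the message he would have played had the state been $1-\omega$, moves $R$'s action to $a_{1-\omega}$ and strictly improves $i^{\star}$'s payoff, contradicting Nash.

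\emph{Sufficiency.} Fix two disjoint paths $P_1,P_2$ between $S$ and $R$ and any $\mu\in\mathcal{E}^d$ supported by a direct strategy pair $(\sigma^d_S,\tau^d)$ with alphabet $M$. I would build a mechanism $\boldsymbol{\mathcal{M}}$ and a profile $\sigma$ in which $S$ draws $m\sim\sigma^d_S(\omega)$, $R$ decodes some $\hat m$ from the network, and plays $\tau^d(\hat m)$; the property to engineer is $\hat m=m$ for every history in $\mathcal{H}(\sigma)$. If this holds, every intermediary's payoff becomes invariant under any $\sigma'\in\Sigma(\sigma)$, indifference turns the prescribed relaying strategy into a best response regardless of $u_i$, and both the Nash property and the independence from $(u_i)_{i\in I}$ required by Definition~\ref{def:robust-direct} follow simultaneously. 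The protocol would exploit the broadcast-to-subset primitive: $S$ first broadcasts $m$ to his first neighbours on $P_1$ and $P_2$, making $m$ common knowledge between them; the message is then relayed along each path with every intermediary broadcasting to $\{\text{predecessor},\text{successor}\}$, so any rewriting is witnessed at the next stage; $R$ finally compares the two copies he receives. Whenever they disagree, a verification phase triggers back-propagating accusations and ``he-said'' certificates exchanged through nested broadcast subsets, culminating in a broadcast from $S$ that adjudicates between the paths.

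\emph{Main obstacle.} The hard part, and the reason the protocol requires many rounds, is that the adversary is \emph{adaptive}: different intermediaries may corrupt different stages, so static majority or fault-identification schemes fail (this is exactly the pathology highlighted by the three-path example of Figure~\ref{fig:3connexe-intro}). I would establish correctness by maintaining, stage by stage, the invariant that there exists at least one honest intermediary on each of $P_1,P_2$ whose latest broadcast is common knowledge with $S$ on one side and with $R$ on the other; because only one intermediary deviates per stage, the broadcast subsets at stage $t+1$ can always be chosen so that a lie told at stage $t$ is exposed by an honest neighbour, restoring the invariant. A finite induction on the lengths of $P_1$ and $P_2$, using the broadcast primitive to nest witnesses inside witnesses, would then produce a bounded-horizon protocol with $\hat m=m$ on every history of $\mathcal{H}(\sigma)$, closing the sufficiency direction.
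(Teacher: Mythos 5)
Your overall architecture matches the paper's: necessity via a cut vertex, and sufficiency by reducing to a ``strong reliability'' subproblem (the receiver decodes $\hat m=m$ on every history in $\mathcal{H}(\sigma)$), after which intermediary indifference makes the relaying profile a Nash equilibrium independently of $(u_i)_{i\in I}$. That reduction is exactly how the paper derives Theorem~\ref{theo1} from Theorem~\ref{theo2}. Two remarks on necessity. First, your ``relabelling deviation'' is not well defined as stated: $i^{\star}$ does not observe $\omega$, so he cannot play ``the message he would have played had the state been $1-\omega$''; the correct deviation is a \emph{simulation}: $i^{\star}$ fixes a target state $\omega_0$, internally generates a fake incoming history distributed as under $\omega_0$, and responds to that, which makes $R$'s view independent of the true state. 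Second, once you note that any unilateral deviation by $i^{\star}$ lies in $\Sigma(\sigma)$, robustness alone forces $\marg_{A\times\Omega}\mathbb{P}_{\sigma'}=\mu$ under the simulation deviation, which already contradicts full revelation without invoking adversarial preferences or Nash incentives at all.

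The genuine gap is in sufficiency. Your protocol sketch and invariant do not survive the adaptive adversary you yourself identify as the main obstacle. The claim that ``a lie told at stage $t$ is exposed by an honest neighbour, restoring the invariant'' ignores two failure modes. (i) The exposure (accusation) must itself travel to $R$ through intermediaries who may be corrupted at \emph{later} stages, so the accusation can be suppressed or, worse, \emph{forged}: without some unforgeable credential, a deviating intermediary can fabricate a ``he-said'' certificate accusing an honest player, and $R$ cannot tell genuine from fabricated accusations --- this is precisely the ambiguity in the $(m',m'',m)$ example of Figure~\ref{fig:3connexe-intro} and the reason broadcast-only schemes fail. The paper resolves this with \emph{authentication keys}: each $p$ broadcasts a fresh uniform draw $x_p^t\in[0,1]$ to both circle-neighbours, so an accusation $(p^+,t,x_{p^+}^t)$ is verifiable because only $p$ and $p^{++}$ know the true key and the probability of guessing it is zero. (ii) Even with verifiable accusations, one must show the adversary's per-stage deviation budget is insufficient to block \emph{both} paths simultaneously; the paper does this with a block structure of length $2n^C-3$, a decoding rule requiring $n^C-1$ consecutive identical messages with no timely matching key, and a counting argument (its Lemma~2) showing that blocking one side costs the pair $(i,i^-)$ at least $n^C-1$ deviations, so blocking both sides costs at least $2n^C-2>2n^C-3$. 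Your ``finite induction on the lengths of $P_1$ and $P_2$ nesting witnesses inside witnesses'' has no analogue of either ingredient, so the core of the sufficiency proof is missing rather than merely compressed.
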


The theorem states that there must exist two disjoint paths between the sender and the receiver. This is clearly necessary. Indeed, if no such paths exist, there exists an intermediary, who controls all the information transiting between the sender and the receiver. In graph-theoretic terms, the intermediary is a \emph{cut} of the graph. In games where the sender and the receiver have perfectly aligned preferences, but the intermediaries have opposite preferences, the ``cut'' can then simulate the histories of messages he would have received in a particular state and behave accordingly. The receiver cannot distinguish between the simulated and real histories and, thus, the ``cut'' can induce the receiver to take a sub-optimal action. In fact, the problem is even more severe. Since robust implementation requires the equilibrium distribution $\mu$ to be implemented at all histories consistent with unilateral deviations by the ``cut,'' the distribution $\mu$ must be independent of the state $\omega$ to guarantee robust implementation. Thus, even in games with perfectly aligned preferences, robust implementation does not hold if there is a ``cut.'' \medskip

To prove sufficiency, we consider the following sub-problem. Suppose that the sender wishes to send the message $m$ to the receiver. We want to find a protocol (a communication mechanism and a strategy profile) such that the receiver correctly learns  the message $m$ not only at all on-path histories, but also at all histories consistent with at most one player, including the sender and the receiver, deviating at each stage of the protocol. Theorem \ref{theo2} states that such a protocol exists when they are two disjoint paths between the sender and the receiver. Moreover, the receiver learns the message after at most $1+(n^C-3)(2n^C-3)$ stages, where $n^C$ is the number nodes on the two shortest disjoint paths from the sender to the receiver. The proof of Theorem \ref{theo1} then follows. Let $(\sigma^*,\tau^*)$ be an equilibrium of the direct communication game, with distribution $\mu$. Let the sender draw the message $m$ with probability $\sigma^*(m|\omega)$ when the state is $\omega$.  We can then invoke Theorem \ref{theo2} to prove that the receiver correctly learns the message $m$ in finite time at all histories consistent with unilateral deviations. Upon learning $m$, we let the receiver choose $a$ with probability $\tau^*(a|m)$. It is then immediate to prove that we have an equilibrium, which guarantees the robust implementation of direct communication.  

\medskip

The solution concept is Nash equilibrium. It is straightforward, albeit tedious and cumbersome, to define belief systems to guarantee the sequential rationality of the equilibria we construct. To see this, let $\sigma$ be a Nash equilibrium of the communication game we construct. Consider any history $h_i^t$ consistent with unilateral deviations, i.e., there exists $\sigma' \in \Sigma(\sigma)$ such that $h_i^t$ is in the support of $\mathbb{P}_{\sigma'}$. If intermediary $i$'s belief at $h_i^t$ is the ``conditioning'' of $\mathbb{P}_{\sigma'}$ on $h_i^t$, then robust implementation implies sequential rationality. Indeed, for all $\sigma' \in \Sigma(\sigma)$, for all $\tilde{\sigma}_i$, the concatenated strategy  $\langle \sigma', (\tilde{\sigma}_i,\sigma_{-i}) \rangle = ((\sigma'_{t'})_{t'<t}, (\tilde{\sigma}_{i,t'},\sigma_{-i,t'})_{t' \geq t})$ is consistent with unilateral deviations, i.e., $\langle \sigma', (\tilde{\sigma}_i,\sigma_{-i}) \rangle \in  \Sigma(\sigma)$. Robust implementation thus implies that $\mathbb{P}_{\langle \sigma', (\tilde{\sigma}_i,\sigma_{-i}) \rangle}=\mu$, that is, $\mathbb{P}_{(\tilde{\sigma}_i,\sigma_{-i})}(\cdot|h^t)\mathbb{P}_{\sigma'}(h^t)=\mu$ for all $\tilde{\sigma}_i$, for all $\sigma' \in \Sigma(\sigma)$. Intermediary $i$ is therefore indifferent between all his strategies at $h_i^t$ (since his belief about $h^t$ is $\mathbb{P}_{\sigma'}(h^t|h_i^t)$).  Since the argument does not rely on the specific $\sigma' \in \Sigma(\sigma)$ we select, we have sequential rationality with respect to any belief system, which is fully supported on the histories consistent with unilateral deviations at $h_i^t$. In other words, as long as the intermediary believes  that at most one player deviated at each of all past stages, we have sequential rationality at $h_i^t$. Similarly, the equilibria we construct are also robust to deviations by the sender and receiver at all stages but the first one, where the sender sends the message, and the last one, where the receiver chooses an action. Therefore, we also have sequential rationality at all histories $h_S^t$ and $h_R^t$.  Finally, at all other histories, we can construct beliefs and actions to guarantee sequential rationality. (See Section \ref{seq-rat} for details.) We now turn to a formal statement of Theorem 2.

\subsection{Strong reliability}
Consider the following problem: the sender wishes to transmit the message $m \in M$, a realization of the random variable $\bold{m}$ with distribution $\boldsymbol{\nu}$, to the receiver, through the network $\mathcal{N}$. We want to construct a \emph{protocol}, i.e., a communication mechanism  and a profile of strategies, such that the receiver correctly ``learns'' the message sent at all terminal histories consistent with unilateral deviations.

\begin{definition}
Transmission of messages is strongly reliable on the network $\mathcal{N}$ if there exist a protocol and a decoding rule $\bold{m}_d: H_{R}^{T+1} \rightarrow M$ such that
\[\mathbb{P}_{\sigma'}\Big(\Big\{h_R^{T+1}: \bold{m}_d(h_R^{T+1})=m\Big\}\Big\vert \bold{m}=m\Big)=1,\] 
for all $\sigma' \in \Sigma^*(\sigma)$, for all $m$. 
\end{definition}

The study of the reliable transmission of messages on networks is not new, see Dolev et al (1993), for an early attempt in Computer Science. (See Renault et al., 2014, for a summary of the literature.) Computer scientists assume that an adversary controls at most $k$ nodes and provide conditions on the network for the reliable transmission of messages. An important feature, however, is that the adversary controls the same $k$ nodes throughout the execution of the protocol. This is a natural assumption in Computer Science, where communication is nearly instantaneous. An adversary would not have the time or capacity to take control of different nodes during the execution of the communication protocol.\footnote{Formally, the reliable transmission of messages requires that $\mathbb{P}_{(\sigma'_i,\sigma_{-i})}\Big(\{h_R^{T+1}: \bold{m}_d(h_R^{T+1})=m\}\Big\vert\bold{m}=m\Big)=1$ for all $\sigma_i'$, for all $i \in I$. This is weaker requirement that strong reliability.} A distinctive feature of our analysis is to consider a \emph{dynamic} adversary, i.e., an adversary which controls a different set of nodes  at each round of the execution of the protocol. (However, we limit our attention to singletons, i.e., $k=1$.) To the best of our knowledge, this is new.

\begin{theorem}\label{theo2}
The transmission of messages is strongly reliable on the network $\mathcal{N}$ if, and only if, the network $\mathcal{N}$ admits two disjoints paths between the sender and the receiver. 
\end{theorem}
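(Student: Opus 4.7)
The theorem splits into necessity (straightforward) and sufficiency (the substantive part). For necessity I would use the existence of a cut vertex, which can mount a simulation attack against any candidate protocol. For sufficiency I would construct a multi-round protocol that combines forwarding along two node-disjoint paths with broadcasting to small groups, and use broadcasting to create common-knowledge certificates that a single dynamic deviator cannot retroactively disavow.

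\textbf{Necessity.} Suppose no two node-disjoint $S$--$R$ paths exist. By Menger's theorem there is a single intermediary $c \in I$ whose removal disconnects $S$ from $R$, partitioning $I^* \setminus \{c\}$ into an $S$-side and an $R$-side with no edges between them. Pick two distinct messages $m \neq m'$ and let $\mu_m$ denote the distribution over the receiver's terminal history induced by the honest execution of a hypothetical strongly reliable protocol when $\mathbf{m}=m$. Now consider the execution with $\mathbf{m}=m'$ in which only $c$ deviates (the same intermediary at every stage, which is permitted in $\Sigma^*(\sigma)$): at each stage $c$ ignores the messages actually received from his $S$-side neighbors, draws by private randomization a phantom incoming history distributed as it would be under $\mathbf{m}=m$, and forwards to his $R$-side neighbors exactly what the prescribed strategy dictates given this phantom history. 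Because every $S$--$R$ transmission must pass through $c$, the receiver's terminal history is a measurable function of $c$'s outputs to the $R$-side plus the randomization of $R$-side players, so its distribution is $\mu_m$. Strong reliability at $m$ then forces $\mathbf{m}_d$ to equal $m$ almost surely, whereas strong reliability at $m'$ forces it to equal $m'$, a contradiction.

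\textbf{Sufficiency.} Let $P_1$ and $P_2$ be two node-disjoint paths and let $n^C$ be the total number of nodes they contain. The protocol I would build proceeds in two alternating types of phase. In a \emph{propagation} phase, $S$ broadcasts $m$ to his two chosen neighbors and each intermediary on $P_k$ relays along its path while simultaneously broadcasting, to the group consisting of its two path-neighbors, the value it claims to have received. Broadcasting is essential here: the claim becomes common knowledge within the three-node group, so the intermediary cannot later deny it to one neighbor without contradicting the other. After propagation, the receiver holds two candidate values, one from the last node of each path; if they agree he decodes to the common value. If they disagree, a \emph{localization} phase is entered in which each intermediary retransmits along the \emph{opposite} path the certificate it broadcast during propagation; since the dynamic adversary can corrupt at most one such cross-report per stage, comparing successive certificates along a path exhibits at most one edge of disagreement and so narrows the set of suspect intermediaries. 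Iterating propagation and localization, and invoking the invariant that at every round at least one of the two paths consists entirely of nodes whose certified claims equal $m$, shrinks the suspect set until it is contained in a single path; the receiver then decodes from the other. A careful count of how many localization rounds are required to discharge every possible location of deviation yields the bound $1 + (n^C - 3)(2n^C - 3)$.

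\textbf{Main obstacle.} The heart of the difficulty is that the adversary is \emph{dynamic}: different intermediaries may deviate at different stages, so no node can be permanently certified honest on the basis of past behavior, and any naive majority or voting argument (as in the introductory example) fails. The role of the broadcasting primitive is precisely to convert individual claims into non-repudiable common-knowledge certificates among triples of nodes, which can then be cross-checked across paths. Designing the localization phase so that accumulated certificates actually pin down the correct message -- rather than merely displacing suspicion indefinitely -- and proving the corresponding correctness invariant by induction on the round, is the principal combinatorial challenge and is what forces the quadratic round complexity.
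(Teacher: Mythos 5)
Your necessity argument is the standard cut-vertex simulation and is fine (the paper in fact omits this direction, citing Renault--Tomala (2008) and Renault et al.\ (2014)). The sufficiency sketch, however, has a genuine gap at its core. You correctly identify that broadcasting to the triple $\{p^-,p,p^+\}$ makes $p$'s claim common knowledge \emph{within that triple}, so $p$ cannot repudiate it to one neighbor without the other knowing. But your localization phase then requires these certificates to be ``retransmitted along the opposite path'' and ``cross-checked across paths,'' and this is where the construction breaks: since the two paths meet only at $S$ and $R$, a certificate originating in the middle of one path must be relayed through a chain of other intermediaries, and nothing in your scheme prevents whichever intermediary is deviating at the relay stage from \emph{forging} a certificate (``node $p$ claimed $m'$'') or suppressing a genuine one. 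Non-repudiation among a triple does not give third parties any way to verify a relayed report about what happened inside that triple. The paper's protocol closes exactly this hole with \emph{authentication keys}: each $p$ broadcasts a fresh uniform draw $x_p^t\in[0,1]$ to its two circle-neighbors at every stage, an accusation against $p$ at stage $t$ must carry $x_p^t$, and only the two neighbors of $p$ know that key, so a forged accusation matches the true key with probability zero while a genuine one is verifiable by the node on the far side of the accused. Your proposal contains no analogue of this device, and without it the ``suspect set shrinks'' claim cannot be established.

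Relatedly, your correctness invariant and termination argument are not the ones that actually work. The claim that ``comparing successive certificates along a path exhibits at most one edge of disagreement'' is the static-adversary intuition; with a dynamic adversary different edges disagree at different stages, which is precisely why naive localization fails. The paper instead proves a frontier-advancing lemma by a counting argument: within each block of $2n^C-3$ stages, preventing the next node on the $i$-side from learning $m$ forces the pair $\{i_k, i_k^-\}$ to deviate at least $n^C-1$ times (one deviation per candidate run of $n^C-1$ consecutive honest transmissions, since blocking each run requires a separately forged key), and symmetrically on the $j$-side, totalling at least $2n^C-2$ deviations in a block of length $2n^C-3$ --- a contradiction with one deviation per stage. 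Your round bound $1+(n^C-3)(2n^C-3)$ matches the paper's, but it falls out of this block structure ($n^C-3$ blocks of $2n^C-3$ stages after the initial broadcast), not from a count of localization rounds. You would need to supply both the authentication mechanism and this counting lemma to complete the sufficiency direction.
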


We  illustrate our protocol with the help of the network in Figure \ref{fig:losange}. (The protocol we construct is slightly more complicated, but they share the same properties.)

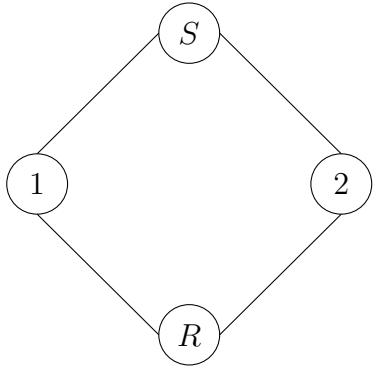
\begin{figure}[h!]
\begin{center}
 \begin{tikzpicture}

\draw (4,0) circle [y radius=0.4, x radius=0.4];
\draw (6,2) circle [y radius=0.4, x radius=0.4];
\draw (4,4) circle [y radius=0.4, x radius=0.4];
\draw (2,2) circle [x radius=0.4, y radius=0.4];

\draw (3.6,0)--(2,1.6);
\draw (2,2.4)--(3.6,4);
\draw(4.4,4)--(6,2.4);
\draw(6,1.6)--(4.4,0); 

\node[] at  (4,0) {$R$};
\node[] at (6,2) {$2$} ;
\node[] at (4,4)  {$S$};
\node[] at  (2,2) {$1$};

\end{tikzpicture}
\end{center}
\caption{Illustration of Theorem \ref{theo2}}
\label{fig:losange}
\end{figure}

The protocol has six stages, which we now describe. To start with, the sender broadcasts the message $m$ to intermediaries $1$ and $2$ at stage $t=1$. At all other stages $t  = 2,\dots,6$, the protocol requires the intermediaries to broadcast the message $m$ and an \emph{authentication key} $x_i^t$, where $x_i^t$ is the authentication key of intermediary $i$ at stage $t$, an uniform draw from $[0,1]$, independent of all messages the intermediary has sent and received. Finally, if the sender observes intermediary $i$ broadcasting a message $m' \neq m$ at stage $t$, the sender broadcasts the triplet $(i,t,x_i^t)$ at stage $t+1$.  We interpret the triplet $(i,t,x_i^t)$ as stating that intermediary $i$ has deviated at stage $t$ and his authentication key is $x_i^t$. If an intermediary receives the triplet $(i,t,x_i^t)$ at stage $t+1$,  the protocol requires the intermediary to broadcast that triplet at all subsequent stages.  \medskip

The receiver does not send messages. At the end of the six stages, the receiver decodes the message as follows. If at any stage, the receiver has received the same message from both intermediaries, then he decodes it as being the correct message. In all other instances, if the receiver has received the \emph{same} message $m_i$ from intermediary $i$ at stages $t_1$, $t_2$ and $t_3$ ($t_1<t_2<t_3$), and  he has \emph{not} received the triplet $(i,t_1,x_{i}^{t_1})$ from the other intermediary by stage $t_3$, then he assumes that the correct message is $m_i$.\footnote{Equivalently, the receiver assumes that the correct message is $m_i$ when he has received $m_i$ from intermediary $i$ at stages $t_1$, $ t_2$ and $t_3$, and all triplets $(i,t_1,y_i^{t_1})$ received from the intermediary $3-i$ by stage $t_3$ are such that $y_i^{t_1}$ is different from $x_i^{t_1}$, the authentication key received from $i$ at stage $t_1$. } \medskip

We now argue that the protocol guarantees the strong reliability of the transmission. To start with, since at most one intermediary deviates at any stage, the protocol guarantees that the receiver obtains at least one sequence of three identical messages from an intermediary. Moreover, if at any stage, the receiver obtains the same message from both intermediaries, it must be the correct message (since at least one intermediary is broadcasting the correct message). So, assume that the receiver obtains the message $m_i$ from intermediary $i$ at stages $t_1$, $t_2$ and $t_3$ $(t_1 <t_2<t_3)$.  If $m_i \neq m$ (hence intermediary $i$ is deviating at stages $t_1$, $t_2$ and $t_3$), the protocol requires the sender to broadcast the triplet $(i,t_1,x_{i}^{t_1})$ at stage $t_1+1 \leq t_2$. The protocol also requires intermediary $j \neq i$ to broadcast the triplet $(i,t_1,x_{i}^{t_1})$ at all stages after having received it.  Hence, the receiver obtains the triplet at stage $t_3$ at the latest. Indeed, since intermediary $i$ is deviating at the stages $(t_1,t_2,t_3)$, intermediary $j \neq i$ cannot be deviating at $t_2$ and $t_3$. Since the authentication key received from intermediary $j$ at either $t_2$ or $t_3$ matches the key received from intermediary $i$ at $t_1$, the receiver learns that the message $m_i$ is not correct. The correct message must therefore be the one broadcasted by intermediary $j$ at stage $t_1$, that is, $m$. Alternatively, if $m_i=m$, the sender does not broadcast the triplet $(i,t_1,x_{i}^{t_1})$. Intermediary $j$ may pretend that the sender had sent the triplet $(i,t_1,y_{i}^{t_1})$ at stage $t_1+1$. However, the probability that the reported authentication key $y_i^t$ matches the actual authentication key $x_i^t$ is zero and, therefore, the receiver correctly infers that the message is $m$. \medskip

We now preview some secondary aspects of the above construction. First, the protocol is robust to deviations by the sender at all stages but the initial stage (where the sender broadcasts $m$).  Indeed, if the sender deviates at stage $t \geq 2$, the two intermediaries don't, and the receiver then correctly learns the message. The protocol we construct shares this property, which will prove useful later on when intermediaries will also have to reliably transmit messages. Second, the protocol starts with the two immediate successors of the sender on the two disjoint paths to the receiver learning the message $m$. At the end of the communication protocol, the receiver also learns the message. In general, the receiver is not the immediate successor of these intermediaries; other intermediaries are. The key step in our construction is to show that at least one of the immediate successors of these intermediaries correctly learns the message at the end of a first block of communication. Therefore, as the protocol goes through blocks, the receiver eventually learns the message. Moreover, each block has $2n^C-3$ stages. (Recall that $n^C$ is the total number of nodes on the two disjoint paths from the sender to the receiver, including them.) Thus, the receiver learns the message in at most $1+(2n^C-3)(n^C-3)$ stages, where the two immediate successors of the sender learns $m$ immediately and then each of the remaining $n^C-3$ players, who do not know $m$ yet, learns the message progressively over time.

\medskip

We conclude this section with some observations about our analysis. First, we allow for rich communication possibilities, most notably, that players are able to broadcast messages to any subset of neighbors. This is necessary for our results to hold. For instance, if  players can only send  private messages (unicast communication), then reliable transmission of messages, let alone strong reliability, is impossible, on the network in Figure \ref{fig:losange}. See Dolev and al. (1993) or Beimel and Franklin (1999). Similarly, if players can only send public messages (broadcast communication), reliable transmission of messages, let alone strong reliability, is impossible on the network in Figure \ref{fig:losange-complete}. See Franklin and Wright (2000) and Renault and Tomala (2008). 

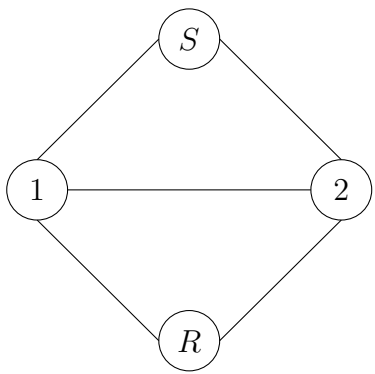
\begin{figure}[h!]
\begin{center}
 \begin{tikzpicture}

\draw (4,0) circle [y radius=0.4, x radius=0.4];
\draw (6,2) circle [y radius=0.4, x radius=0.4];
\draw (4,4) circle [y radius=0.4, x radius=0.4];
\draw (2,2) circle [x radius=0.4, y radius=0.4];

\draw (3.6,0)--(2,1.6);
\draw (2,2.4)--(3.6,4);
\draw(4.4,4)--(6,2.4);
\draw(6,1.6)--(4.4,0); 
\draw(2.4,2)--(5.6,2);

\node[] at  (4,0) {$R$};
\node[] at (6,2) {$2$} ;
\node[] at (4,4)  {$S$};
\node[] at  (2,2) {$1$};

\end{tikzpicture}
\end{center}
\caption{Broadcasting to all: reliable communication is impossible}
\label{fig:losange-complete}
\end{figure}

While the formal proofs of these two impossibilities are involved, the intuition is that the receiver is unable to distinguish between two types of histories: histories where intermediary 1 is pretending that the message is $m'$ and intermediary 2 is deviating, and histories where intermediary 2 is pretending that the message is $m$ and intermediary 1 is deviating.  The key is that an intermediary can simulate fictitious histories, i.e., histories of messages sent and received when the message is any $m$, and behave accordingly. As is clear from the example, the protocol we construct makes it possible for the receiver to distinguish these two types of histories. If intermediary 1 deviates and pretend that the message is $m'$, the receiver correctly infers that intermediary 2 is not deviating. This requires to broadcast messages to selected subsets of neighbors.

Second, our protocol does not restrict the messages players can broadcast to any subset of their neighbors. E.g., in addition to the messages our protocol requires the intermediaries to broadcast, the intermediaries can also send private messages to the sender and the receiver. The equilibrium we construct simply treats these additional messages as uninformative (babbles). 

Third, we use authentication keys. While their use is ubiquitous in online retailing, it is less so in the daily activities of most organizations. Their only purpose, however, is to insure that a group of individuals share a common information, which can only be known by individuals outside the group if it is told to them by one member of the group. For instance, at a meeting, the common information can be the color of the tie of the meeting's chair or the identity of the second speaker. 

Fourth, a detailed knowledge of the communication network is not needed. To execute our protocol, a player on one of the path from the sender to the receiver only needs to know his two immediate neighbors on the path and the total number of players on the two disjoint paths.

Finally, the communication games we construct may have additional equilibrium distributions. By construction, these distributions correspond to communication equilibria of the direct communication game. The next section shows that in fact we can obtain \emph{all} the communication equilibrium distributions of the direct communication game.

\subsection{Mediated communication }\label{comeq}
Consider the following mediated extension of the direct communication game. The sender first sends a message $m \in M$ to a mediator. The mediator then sends a message $r \in R$, possibly randomly, to the receiver, who then takes an action $a \in A$. A strategy for the sender is a map $\sigma:\Omega \rightarrow \Delta(M)$, while a strategy for the receiver is a map $\tau:R \rightarrow \Delta(A)$. The mediator follows a recommendation rule: $\varphi: M \rightarrow \Delta(R)$. A communication equilibrium is a communication device $\langle M,R, \varphi \rangle$ and an equilibrium $(\sigma^*,\tau^*)$ of the mediated game induced by the communication device. Thanks to the revelation principle (Forges, 1986 and Myerson, 1986), we can restrict attention to canonical communication equilibria, where $M=\Omega$, $R=A$ and the sender has an incentive to be truthful (to report the true state), and  the receiver has an incentive to be obedient (to follow the recommendation). We let $\mathcal{CE}^d$ be the set of communication equilibrium distributions over $A \times \Omega$ of the direct communication game. \medskip

For an example, consider the game in Figure \ref{tab:les-pigeons} due to Farrell (1988), where both states are equally likely.  
\begin{figure}[h!]
\begin{center}
\begin{tabular}{|c|c|c|c|}
\hline
 $(u_S,u_R)$ & $a$ & $b$ & $c$ \\ \hline
$\omega$ & $2,3$& $0,2$ & $-1,0$ \\ \hline
$\omega'$ & $1,0$ & $2,2$ & $0,3$\\ \hline 
\end{tabular}
\end{center}
\caption{An example (Farrell, 1988)}
\label{tab:les-pigeons}
\end{figure}
\medskip 

Farrell proves that the receiver takes action $b$ in all equilibria of the direct communication game. The sender's payoff is therefore $0$ (resp., $2$) when the state is $\omega$ (resp., $\omega'$), while the receiver's payoff is $2$.   As argued by Myerson (1991), there exists a communication equilibrium, where both the receiver and the sender are better off. To see this, assume that the mediator recommends action $b$ to the receiver at state $\omega'$ and randomizes uniformly between the recommendations $a$ and $b$ at $\omega$. Upon observing the recommendation $a$, (resp., $b$), the receiver infers that the state is $\omega$ with probability $1$ (resp., $1/3$) and, therefore, has an incentive to be obedient. Similarly, the sender has an incentive to truthfully report his private information. The sender's payoff is therefore $1$ (resp., $2$) when the state is $\omega$ (resp., $\omega'$), while the receiver's payoff is $9/4$. Both the sender and receiver are better off. We now argue that if the communication between the sender and the receiver is intermediated (not to be confused with mediated), then it is possible to replicate this equilibrium outcome, without the need for a trusted mediator, and regardless of the preferences and behaviors of the intermediaries (provided that at most one intermediary deviates at each stage of the communication protocol).

\medskip

\begin{theorem}\label{theo3} Robust implementation of mediated communication is possible on the network $\mathcal{N}$ if, and only if,  
the network $\mathcal{N}$ admits two disjoints paths between the sender and the receiver. 
\end{theorem}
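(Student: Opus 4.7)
This follows directly from Theorem~\ref{theo1}. Every Nash equilibrium distribution of the direct communication game is realized by a canonical communication equilibrium in which the ``mediator'' simply relays the sender's message verbatim, so $\mathcal{E}^d\subseteq\mathcal{CE}^d$. Robustly implementing every distribution in $\mathcal{CE}^d$ in particular implements every distribution in $\mathcal{E}^d$, and Theorem~\ref{theo1} then forces $\mathcal{N}$ to admit two disjoint paths between $S$ and $R$.

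\textbf{Sufficiency.} Assume $\mathcal{N}$ has two disjoint paths $P_1,P_2$ between $S$ and $R$, and fix $\mu\in\mathcal{CE}^d$ with canonical form $\varphi\colon\Omega\to\Delta(A)$. Pick a measurable decoder $\boldsymbol{a}\colon\Omega\times[0,1]\to A$ such that $\boldsymbol{a}(\omega,U)\sim\varphi(\cdot\mid\omega)$ when $U$ is uniform, and let $s_1\in P_1$, $s_2\in P_2$ be the intermediaries adjacent to $S$. I would construct a three-phase protocol. In Phase~I, $S$ broadcasts $\omega$ to $\{s_1,s_2\}$, and the protocol of Theorem~\ref{theo2} is run on the cycle $P_1\cup P_2$ to propagate $\omega$ along both paths in a strongly reliable fashion. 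In Phase~II, $s_1$ and $s_2$ execute a \emph{jointly controlled lottery} which produces a uniform $\xi\in[0,1]$ shared by $s_1,s_2$ but hidden from $S$ and $R$: they simultaneously draw independent uniform $(y_1,b_1)$ and $(y_2,b_2)$, then each $s_i$ sends the mask $b_i$ privately to $S$ and the blinded value $y_i\oplus b_i$ privately to $R$; $S$ next forwards $b_j$ privately to $s_i$ and $R$ forwards $y_j\oplus b_j$ privately to $s_i$, so that $s_i$ reconstructs $y_j$ and hence $\xi=y_1\oplus y_2$. In Phase~III, both $s_1$ and $s_2$ compute $a=\boldsymbol{a}(\omega,\xi)$ and transmit it to $R$ via an adapted version of the Theorem~\ref{theo2} protocol on the cycle, with authentication keys derived from the secret $\xi$.

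\textbf{Incentives and robustness.} The Nash conditions are verified from the IC inequalities of $\varphi$. The sender's only payoff-relevant decision is the state reported in Phase~I (at later stages he is a blind relay, and any deviation only shifts $\xi$ by an additive constant without changing its uniform distribution), so CE truthfulness makes truth-telling optimal. The receiver's information consists of $a$ together with the blinded values $y_i\oplus b_i$, which are, by construction, uniform and independent of $\omega$ given $a$; hence his posterior on $\omega$ matches the canonical posterior and CE obedience makes ``take $a$'' optimal. For robustness, since at most one intermediary deviates per stage, at least one of $y_1,y_2$ is honestly drawn, so $\xi$ is uniform; the secret-sharing step ensures $s_1$ and $s_2$ reconstruct the same $\xi$ whenever $S$ and $R$ faithfully forward (which they do by Nash), and the adapted Theorem~\ref{theo2} transmission delivers the correct $a$ to $R$.

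\textbf{Main obstacle.} The hard part is Phase~III. A single deviation by $s_1$ or $s_2$ at the final transmission can cause the two paths to carry different candidate recommendations, and $R$ must resolve the disagreement in favour of the honest one while still learning nothing beyond $a$ (otherwise obedience collapses, as would be the case if $\xi$ leaked to $R$). I would resolve this by binding the authentication keys of Theorem~\ref{theo2} to the shared secret $\xi$: the honest member of $\{s_1,s_2\}$ can produce a key consistent with $\xi$, whereas a deviator who has to commit to a key before seeing the other's transmission cannot forge a matching one except on a probability-zero event. Checking that the resulting decoding rule delivers $\boldsymbol{a}(\omega,\xi)$ on every history in $\Sigma(\sigma)$, and that neither sender nor receiver has a profitable deviation in any of the relay steps, constitutes the bulk of the technical work.
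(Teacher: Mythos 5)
Your necessity reduction via $\mathcal{E}^d\subseteq\mathcal{CE}^d$ and Theorem~\ref{theo1} is fine, and the overall architecture (jointly controlled lottery plus reliable transmission of the resulting recommendation) is the right one. But there is a genuine gap exactly where you locate the ``main obstacle,'' and your proposed fix does not close it. Because your lottery is run between the two intermediaries $s_1$ and $s_2$, only those two players ever know the recommendation $a$, so the final phase has only two-fold redundancy and a single deviation there leaves $R$ with two conflicting candidates. Binding the authentication keys to the shared secret $\xi$ cannot break the tie: a deviating $s_i$ \emph{also} knows $\xi$ (he helped generate it), so he can produce keys ``consistent with $\xi$'' for a false recommendation $a'$ exactly as easily as the honest party can for $a$; and $R$, who by design must not learn $\xi$, has no way to check consistency with $\xi$ in any case. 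The paper sidesteps the problem by running the lottery between the \emph{sender} and $i_1$: the sender draws $x$, intermediary $i_1$ draws $y$, and $y$ is reliably transmitted from $i_1$ to $j_1$ via the Theorem~\ref{theo2} protocol (so $R$ learns $y$ but never $x$). Then \emph{three} players --- $S$, $i_1$ and $j_1$ --- know $(\omega,x,y)$, hence $a=\phi_\omega(x,y)$, and they launch three synchronized parallel copies of the strongly reliable protocol. Since at most one of the three can deviate at the synchronized broadcast stage, and each copy is robust to deviations by its own ``sender'' at every later stage, majority decoding at $R$ is unambiguous. Properties (ii) and (iii) of the jointly controlled lottery then deliver sender truthfulness (no choice of $x$ tilts the law of $\phi_\omega(x,Y)$) and receiver obedience (conditional on $y$, the recommendation still has law $\varphi(\cdot\mid\omega)$).

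Two further steps of your construction would fail as written. First, Phase~I runs the Theorem~\ref{theo2} protocol to ``propagate $\omega$ along both paths''; that protocol terminates with the \emph{receiver} learning the transmitted message, so $R$ would learn $\omega$ and obedience would collapse for any non-fully-revealing $\varphi$. Only $s_1$ and $s_2$, who are adjacent to $S$, need $\omega$, and a single broadcast suffices. Second, your secret-sharing step has $s_1$ and $s_2$ sending ``privately to $R$,'' but on a general network with two disjoint paths these intermediaries need not be adjacent to $R$; the messages would have to be relayed by other intermediaries who can deviate, so you would need yet another strongly reliable sub-protocol there, re-importing the very problem you are trying to solve. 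The paper's version avoids both issues because the only quantity that must travel across the circle before the final phase is $y$, which is payoff-irrelevant on its own.
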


Theorem \ref{theo3} thus states that if there are two disjoint paths between the sender and the receiver in the network, then robust implementation of mediated communication is possible. In other words, we can replicate the mediator through unmediated communication, and the replication is robust.\medskip

To get a flavor of our construction, let us again consider the network in Figure \ref{fig:losange}. Assume that $\Omega$ and $A$ are finite sets. Fix a canonical communication equilibrium $\varphi: \Omega \rightarrow \Delta(A)$. For all $\omega \in \Omega$, let $A_{\omega}$ be a partition of $[0,1]$ into $|A|$ subsets, with the subset $A_{\omega}(a)$ corresponding to $a$ having Lebesgue measure $\varphi(a|\omega)$.  The protocol has three distinct phases. In the first phase, the sender broadcasts the state $\omega$ to the intermediaries $1$ and $2$. The second phase replicates the communication device. To do so, the sender and intermediary 1  simultaneously choose a randomly generated number in $[0,1]$. Let $x$ and $y$ be the numbers generated by the sender and intermediary 1, respectively. Players then follow the communication protocol constructed above (see Theorem \ref{theo2}), which makes it possible for intermediary 1 to reliably transmit $y$ to intermediary 2.\footnote{Therefore, intermediary 1 plays the role of the sender, while intermediary 2 plays the role of the receiver.} Thus, at the end of the second phase, the sender and both intermediaries know $\omega$, $x$ and $y$, while the receiver only knows $y$.  The third phase starts once the sender and both intermediaries have learnt $x$ and $y$. At the first stage of the third phase, the sender and the intermediaries simultaneously  compute $x+y \pmod{[0,1]}$, output the recommendation $a$ if $x+y \pmod{[0,1]} \in A_{\omega}(a)$, and each starts a  communication protocol  to reliably transmit the  recommendation to the receiver. Thus, the three communication protocols are synchronized and start at the very same stage. At the end of the third phase, the receiver learns the recommendation sent by the sender and both intermediaries. Since at most one of them can deviate at the stage where they broadcast their recommendation, the receiver decodes at least two identical recommendations and plays it. 

\section{Concluding remarks}

Our analysis extends to communication games with multiple senders. More precisely, consider a direct communication game, where senders receive  private signals about a payoff-relevant state, send messages to the receiver, and the receiver takes an action.  If there exist two disjoint paths of communication from each sender to the receiver, we can then replicate our analysis to robustly implement any equilibrium distribution of the direct communication game. The key is to have all senders broadcast their messages at the first stage and then to run copies of our protocol in parallel. Since our protocol is resilient to what a player learns during its execution, this guarantees that the receiver learns the correct messages. We stress, however, that it is essential that all senders move simultaneously at the first stage. \medskip 

Throughout, we have assumed that the sender and the receiver are not directly connected. We now discuss how our results would change if the sender and the receiver can also communicate directly. Theorem \ref{theo1} would extend immediately. More precisely, we would have that robust implementation of direct communication is possible if, and only if, either the sender and receiver are directly connected or there exist two disjoint paths between the sender and the receiver.

The extension of Theorem \ref{theo3} is more delicate. To start with,  notice that if the direct link is the only link between the sender and the receiver, then there is no hope to replicate the mediator, as all communication would be direct. We need to be able to intermediate the communication.  We claim that if there are at least two intermediaries, labelled $1$ and $2$, such that the sender, the receiver and the two intermediaries are on a ``circle,'' then robust implementation of mediated communication is possible.\footnote{A circle is a collection of nodes such that all pairs of nodes have two disjoint paths between them.}

We now present an informal proof. As a preliminary observation, note that  on the circle, it must be that either the two intermediaries are on two disjoint paths from the sender to the receiver or are on the same path. See the two networks in Figure \ref{fig:direct-link} for an illustration.

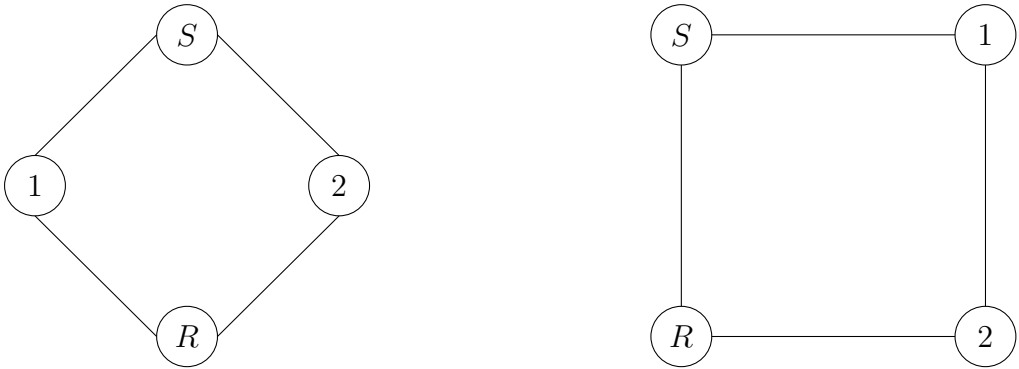
\begin{figure}[h]
\centering
\begin{minipage}{.5\textwidth}
\centering
 \begin{tikzpicture}

\draw (4,0) circle [y radius=0.4, x radius=0.4];
\draw (6,2) circle [y radius=0.4, x radius=0.4];
\draw (4,4) circle [y radius=0.4, x radius=0.4];
\draw (2,2) circle [x radius=0.4, y radius=0.4];

\draw (3.6,0)--(2,1.6);
\draw (2,2.4)--(3.6,4);
\draw(4.4,4)--(6,2.4);
\draw(6,1.6)--(4.4,0); 

\node[] at  (4,0) {$R$};
\node[] at (6,2) {$2$} ;
\node[] at (4,4)  {$S$};
\node[] at  (2,2) {$1$};

\end{tikzpicture}
\end{minipage}%
\begin{minipage}{.5\textwidth}
\centering
 \begin{tikzpicture}

\draw (6,4) circle [y radius=0.4, x radius=0.4];
\draw (6,0) circle [y radius=0.4, x radius=0.4];
\draw (2,4) circle [y radius=0.4, x radius=0.4];
\draw (2,0) circle [x radius=0.4, y radius=0.4];

\draw (6,3.6)--(6,.4);
\draw (2,3.6)--(2,.4);
\draw(2.4,4)--(5.6,4);
\draw(2.4,0)--(5.6,0); 

\node[] at  (2,0) {$R$};
\node[] at (6,0) {$2$} ;
\node[] at (2,4)  {$S$};
\node[] at  (6,4) {$1$};

\end{tikzpicture}
\end{minipage}
\caption{$\mathcal{N}$ (left) and $\mathcal{N}^*$ (right)}
\label{fig:direct-link}
\end{figure}

In the former case (network $\mathcal{N}$), Theorem \ref{theo3} applies verbatim. In the latter case (network $\mathcal{N}^*$), we need to modify the protocol of Theorem \ref{theo3} to guarantee that intermediary 2 learns the state $\omega$, without the receiver learning it. Once the sender and the two intermediaries know $\omega$, we can then use the second and third phase  of the protocol constructed in Section 3.3 to implement the communication equilibrium. 

We modify the first phase as follows. We first let intermediary $1$ broadcast an encryption key $k$  to the sender and intermediary 2. The sender then encrypts the state $\omega$ with the encryption key $k$ and transmits the encrypted message to intermediary 2. The transmission of the encrypted message is achieved via the strongly reliable protocol constructed in Section 3.2. 

It remains to argue that at the end of the first phase, the sender and the two intermediaries know $\omega$. It is clear for the sender and intermediary 1. As for intermediary 2, at the end of the first phase, he knows the encryption key and the encrypted message and, thus learn the state $\omega$. This completes the informal ``proof.'' 

Note that either the sender or an intermediary may attempt to reveal the state to the receiver. We do not preclude this possibility. However, as already explained, the equilibria we construct are such that the receiver does not expect the sender and intermediaries to do so and thus consider their attempts as gibberish. 

Finally, we don't know whether the condition of having at least two intermediaries such that the sender, the receiver and the two intermediaries are on a ``circle,'' is necessary. We conjecture that it is.

\appendix 

\section*{Appendix}\label{2A1}
We first prove Theorem \ref{theo2} and then prove Theorems \ref{theo1} and \ref{theo3}. We do not prove the necessity parts as the proofs follow well-trodden paths, see e.g.,  Renault and Tomala (2008) or Renault et al. (2014).

\section{Proof of Theorem \ref{theo2}}

Assume that  the network $\mathcal{N}$ admits two disjoint paths between the sender and the receiver, and denote the two shortest paths by $S,i_1,\dots,i_{K},R$ and $S,j_1,\dots,j_{K'},R$ respectively, for some $K, K' \ge 1$. We let $\mathcal{P}$ be the set of nodes on these two paths, including the sender and the receiver, and let $n^C$ be its cardinality. Throughout, we refer to these two paths as the ``circle,'' with the nodes $\{S,i_1,\cdots,i_K,R\}$ (resp.,$\{S,j_1,\dots,j_{K'},R\}$) as the ``left side of the circle'' (resp., ``right side of the circle'').  For each player $p \in \mathcal{P} \setminus \{S,R\}$, we call the {\em successor} of $p$, denoted $p^+$, his immediate successor on the path to the receiver.  Similarly, we call the {\em predecessor} of $p$, denoted $p^-$ his immediate predecessor on the path to the sender. E.g., if $p=i_{k}$ for some $1 \leq k\leq K$, $p^+=i_{k+1}$ and $p^-=i_{k-1}$, with the convention that $i_0=S$ and $i_{K+1}=R$.  The sender (resp., the receiver) have the same two nodes as predecessors and successors, i.e., $i_1$ and $j_1$ (reps., $i_K$ and $j_{K'}$). However, as will be clear later, whenever $i_1$ (resp., $i_K$) plays the role of a predecessor, then $j_1$ (resp., $j_{K'}$) plays the role of the successor and, conversely. Thus, we mostly focus on messages flowing from the sender to the receiver. However, messages will also need to flow in the other direction.


\subsection{The communication protocol $(\boldsymbol{\mathcal{M}},\sigma)$}

Throughout, we write $[1:T]$ for $\{1,\dots,T\}$.

 \textit{\textbf{The message space.}} Remember that $M$ is the set of messages in the direct communication game and let $m_0 \notin M$ be an arbitrary message, interpreted as a null message. The set of messages player $i \in I^*$ can broadcast to the subset $N_i \in 2^{\mathcal{N}_i}\setminus \{\emptyset\}$ is:
  \begin{eqnarray*}
 \mathcal{M}_{i,N_i} & = & \Big(M \cup \{m_0\}\Big)  \times [0,1]  \\
 & &\bigtimes   \Big(\bigtimes_{j\in \mathcal{N}_i} \Big \{ \{\emptyset\} \bigcup \big(\{j\}\times [1:T] \times [0,1]\big)\Big\} \Big)\\
 & &\bigtimes   \Big(\bigtimes_{j \in I^* \setminus \mathcal{N}_i} \Big \{ \{\emptyset\} \bigcup \big(\{j\}\times [1:T] \times [0,1]\big)\Big\}^{L} \Big),
 \end{eqnarray*}
with $L=1+(n^C-3)(2n^C-3)$.

The set of messages player $i$ can send is $\bigtimes_{N_i \in  2^{\mathcal{N}_i}\setminus \{\emptyset\}}\mathcal{M}_{i,N_i}$.  In words,  each player can broadcast to any subset of his neighbors a grand message composed of: (i) a message $m \in M$ or the null message $m_0$, (ii) a number in $[0,1]$ and (iii) a tuple of triplets, each of them being composed of the name of a player, a stage, and a number in $[0,1]$. Crucially, player $i$ can send a \emph{single} triplet about each of his neighbors at each stage. However, player $i$ can send several triplets (at most $L$) about the other players. In other words, at each stage, player $i$ can send a list of triplets to his neighbors $N_i$, but no list includes more than one triplet about $j \in \mathcal{N}_i$. 

\medskip

%
%
%
%
%

\textit{\textbf{The strategies of the players.}} For any player $p \notin \mathcal{P}$, the protocol requires them to broadcast randomly drawn messages in $\mathcal{M}_{i,N_i}$ at each stage $t$ to each subset of neighbors $N_i \in 2^{\mathcal{N}_i} \setminus \{\emptyset\}$, independently of all messages received and sent up to stage $t$. In words, they are babbling.

We now define the strategy for player $p \in \mathcal{P}$. We focus on the messages they broadcast to their neighbors on the circle, i.e., to $\mathcal{N}_p \cap \mathcal{P}$. To all other subsets of neighbors, they sent randomly generated messages, independently of the histories of messages sent and received, i.e., they babble. In what follows, when we say that player $p$ broadcasts a message, we mean that he broadcasts the message to the subset $\{p^-,p^+\}$. Remember that for $S$ (resp., $R$), $\{p^-,p^+\} = \{i_1,j_1\}$ (resp., $\{i_{K},j_{K'}\}$).

\begin{itemize}
\item {\bf Authentication keys:} At each stage $t \ge 1$ of the communication protocol, $p$ broadcasts a uniformly drawn message $x^t_{p}$ in $[0,1]$ to his neighbors on the ``circle,'' that is, to  the two players in $\mathcal{N}_p \cap \mathcal{P}$: this message $x^t_{p}$ is called the {\em authentication key} of player $p$ at stage $t$.
\item {\bf First stage:} At stage $t=1$, the sender broadcasts the message $m$ to his neighbors on the circle, i.e., to $i_1$ and $j_1$, along with his authentication key. All other players broadcast $m_0$ to their neighbors on the circle (along with their authentication keys).
\item {\bf Subsequent stages:} Starting from stage $t=2$ onwards, the protocol proceeds in blocks of $2n^C-3$ stages. Denote these blocks by $B_b$, with $b=1,2\dots, \overline{B}$. (We have at most $n^C-3$ blocks.) For instance, $B_1$ stands for the block that starts at stage $t=2$, $B_2$ stands for the block that starts at stage $t=2+2n^C-3=2n^C-1$, etc. For each $b=1,2\dots$, let $B_b:=\{t_b,\dots,t_b+2n^C-3\}$ where $t_b$ is the first stage of block $B_b$. In each block $B_b$, the strategy of $p$ is the following:
\begin{itemize}
\item {\bf Transmission of the sender's message:} 
\begin{itemize}
\item if $p$ knows  the message $m$ at the beginning of the block, that is, either $p$ is in $\{S,i_1,j_1\}$ or $p$ has {\em learnt} the message $m$ at the end of the previous block (see below, where the decoding rule at the end of each block is defined), then $p$ broadcasts the message $m$ to his neighbors $p^-$ and $p^+$ at all stages $t_b,\dots,t_b+{2n^C-3}$ of the current block (the neighbors of $S$ are $i_1$ and $j_1$);
\item if $p$ does not know the message $m$ at the beginning of the block, then $p$ broadcasts $m_0$ to his neighbors $p^-$ and $p^+$ at all stages $t_b,\dots,t_b+{2n^C-3}$ of the current block.
\end{itemize}
(Remember that $p$ also sends an authentication key.)
\item {\bf Detection of deviations:} 
\begin{itemize}
\item if $p$ detects his successor $p^+$ making a false announcement about the message $m \in M$ at some stage $t \in \{t_b,\dots,t_b+{2n^C-4}\}$, that is,
\begin{itemize}
\item either $p$ knows $m \in M $ and $p^+$ broadcasts at stage $t$ the message $\tilde{m} \in M \setminus \{m\}$, interpreted as ``{\em player $p^+$ is broadcasting the false message $\tilde{m}$},'' 
\item or $p$ does not know the message $m$  and $p^+$ broadcasts at stage $t$ the message $\tilde{m} \in M $, interpreted as  {\em ``player $p^+$ is broadcasting the message $\tilde{m}$ although he cannot know it,''}
\end{itemize}
\noindent  then $p$ broadcasts the triplet $(p^+,t,x_{p^+}^t)$ to player $p^-$ and to $p^+$, where $x^t_{p^+}$ is the true authentication key broadcasted by $p^+$ at $t$.
Note that if $p=S$ (resp., $R$), then $p^+$ is either $i_1$ or $j_1$ (resp., $i_K$ or $j_{K'}$). 
\item if $p$ does not detect his successor $p^+$ making a false announcement about the message $m \in M$ at some stage $t \in \{t_b,\dots,t_b+{2n^C-4}\}$, then $p$ broadcasts the triplet $(p^+,t,y)$ to $p^-$ and $p^+$, where $y$ is randomly drawn from $[0,1]$.

\end{itemize}
The key observation to make is that only players $p$ and $p^{++}$ know the true authentication key of $p^+$ at stage $t$. Therefore, $p^{++}$ can authenticate whether $p^+$ deviated at some stage $t$ by cross-checking the authentication key $x^t_{p^+}$ received from $p^+$ at $t$ with the key broadcasted by $p$ (and having transited on the circle in the opposite direction).

\item {\bf Transmission of past deviations:}
\begin{itemize}
\item If $p \neq i_1$ is on the left side of the circle and receives at some stage $t\in \{t_b,\dots,t_b+{2n^C-4}\}$
\begin{itemize}
\item  from $p^+$  a message containing the triplet $(p',s,x_{p'}^s)$, $t_b\le s < t$ with $p'$ on the left side of the circle, then $p$ broadcasts the message to $p^{-}$ and $p^+$ at stage $t+1$.
\item  from $p^-$  a message containing the triplet $(p',s,x_{p'}^s)$, $t_b\le s < t$ with $p'$ on the right side of the circle, then $p$ broadcasts the message to $p^-$ and $p^{+}$ at stage $t+1$.
\end{itemize}
\item Similarly, if  $p \neq j_1$ is on the right side of the circle and receives at some stage $t\in \{t_b,\dots,t_b+{2n^C-4}\}$
\begin{itemize}
\item  from $p^+$  a message containing the triplet $(p',s,x_{p'}^s)$, $t_b\le s < t$ with $p'$ on the right side of the circle, then $p$ broadcasts the message to $p^{-}$ and $p^+$ at stage $t+1$.
\item  from $p^-$  a message containing the triplet $(p',s,x_{p'}^s)$, $t_b\le s < t$ with $p'$ on the left side of the circle, then $p$ broadcasts the message to $p^{-}$ and  $p^{+}$ at stage $t+1$.
\end{itemize}
\item If $p=i_1$ (respectively $p=j_1$) receives from $p^+$ a message containing the triplet $(p',s,x_{p'}^s)$, $t_b\le s < t$ with $p'$ on the left side of the circle (respectively with $p'$ on the right side of the circle), then $p$ broadcasts the message to $p^{-}$ and $p^+$ at stage $t+1$.
\item If $p=i_1$ (respectively $p=j_1$) receives from $p^- = S$ at some stage $t\in \{t_b,\dots,t_b+{2n^C-4}\}$ a message containing the triplet $(p',s,x_{p'}^s)$, with $t_b\le s < t $ and $p'$ on the right side of the circle (resp., on the left side of the circle), then two cases are possible:

\begin{itemize}
\item[(i):] If $p' \neq j_1$ (resp., $p' \neq i_1$), then $p$ broadcasts it to both $p^-=S$ and $p^+=i_2$ (resp., $p^+=j_2$) at stage $t+1$.
\item[(ii):]  If $p' = j_1$ (resp., $p'=i_1$) and $p=i_1$ (resp., $p=j_1)$ has not received the triplet $(p,s,x_p^s)$, then $p=i_1$ (resp., $p=j_1$) broadcasts the triplet $(j_1,s,x_{j_1}^s)$ (resp., $(i_1,s,x_{i_1}^s)$) at stage $t+1$.
\item[(iii):] If $p' = j_1$ (resp., $p'=i_1$) and $p=i_1$ (resp., $p=j_1)$ has received the triplet $(p,s,x_p^s)$, then $p=i_1$ (resp., $p=j_1$) broadcasts the triplet  $(p',s,y)$ at state $t+1$, with $y$ a random draw from $[0,1]$. 
\end{itemize}
\smallskip

The intuition is that if $p$ receives from $S$ a message, which reads as {\em ``$S$ claims that both $i_1$ and $j_1$ deviated at the same stage $s$,''} then $S$ must be deviating (since under unilateral deviations, at most one player deviates at each stage).

\item {\bf Auto-correcting past own deviations:} if $p$ has received the triplet $(p',s,x_{p'}^s)$ at stage $t \in \{t_b,\dots,t_b+{2n^C-4}\}$ but didn't forward it at stages $t+1,\dots, t+\Delta$, then he forwards it at stage $t+\Delta+1$. 
In words, the protocol requires a player to broadcast the triplet $(p',s,x_{p'}^s)$ at stage $t+1$ upon receiving it at stage $t$, to broadcast it at $t+2$ if he fails to broadcast it at stage $t+1$, to broadcast it at $t+3$ f he fails to broadcast it at stage $t+1$ and $t+2$, etc, so that unless the player deviates at all stages $t'\geq t+1$, the triplet is broadcasted at some stage during the block. 
\end{itemize}
\end{itemize}
\end{itemize}

\medskip 

\textbf{\textit{The decoding rule.}} The decoding rule describes how messages are analyzed at the end of each block. Players not in $\mathcal{P}$ do not analyze their messages. Consider now the players in $\mathcal{P}$.

At the beginning of block $B_1$, the sender and his two neighbors $i_1$ and $j_1$ {\em know} the message $m$ broadcasted by the sender at stage $t=1$. At the end of each block, only players who do not know yet $m$ analyze the message received during the block. Thus, only the players in $\mathcal{P}\setminus \{S,i_1,j_1\}$ analyze messages as the end of the block $B_1$. (The gist of our arguments is to show that the set of players who know $m$ at the end of a block is strictly expanding over time and, ultimately, includes the receiver.) Thus, consider player $p$, who does not know yet $m$ at the beginning of the block $B_b$. At the end of the block $B_b$, he analyzes messages as follows: 

\begin{itemize}
\item If $p$ has received during the block $(n^C-1)$ times a grand message containing the same message $m \in M$ from his predecessor $p^-$, let say at stages $s^1,\dots,s^{n^C-1}$, where $t_b \le s^1 < s^2 < \dots < s^{n^C-1} \le t_b+{2n^C-3}$,
\item and if $p$ has not received by stage $s^{n^C-1}$ at the latest from his successor $p^+$ the message $(p^-, s^1, x_{p^-}^{s^1})$ where $x_{p^-}^{s^1}$ matches the value of the authentication key received by $p$ from $p^-$ at stage $s_1$,
\end{itemize}

\noindent then, player $p$ {\em learns} the message $m$ and starts the next block $B_{b+1}$ as a player who {\em knows} $m$. Otherwise, player $p$ does not learn the message. Moreover, once a player knows the message $m$, he knows it at all the subsequent blocks. If $p=R$ and $p^-=i_K$ (resp, $j_{K'}$), then $p^+=j_{K'}$ (resp., $p^{-}=i_{K}$).

At all other histories, the strategies are left unspecified.

\subsection{Two key properties of the protocol}

The protocol we construct has two key properties. The first property states that no player $p \in \mathcal{P}$ learns incorrectly, that is, if player $p \in \mathcal{P}$ learns a message, the message is indeed the one the sender has sent. Lemma \ref{lem1} is a formal statement of that property.  

\begin{lemma}\label{lem1}Let $m \in M$ be the message broadcasted by the sender to $i_1$ and $j_1$ at stage $t=1$. If at most one player deviates from the protocol at each stage, then it is not possible for player $p \in \mathcal{P}$ to learn $m' \in \mathcal{M}\setminus \{m\}$. 
\end{lemma}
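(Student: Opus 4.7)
\textbf{I would argue} by contradiction and induction on the block index $b$. Suppose $p\in\mathcal{P}$ learns some $m'\in M\setminus\{m\}$ at the end of the first block $B_b$ in which incorrect learning occurs; the induction hypothesis then guarantees that at the start of $B_b$ every player who ``knows'' a message in fact knows $m$. By the decoding rule, there exist stages $s^1<\cdots<s^{n^C-1}$ in $B_b$ at which $p$ received $m'$ from $p^-$, while no triplet $(p^-,s^1,y)$ with $y=x^{s^1}_{p^-}$ has reached $p$ from $p^+$ by stage $s^{n^C-1}$. Because $p^-$ is prescribed to broadcast $m$ (if he knows $m$) or the null message $m_0\notin M$ (otherwise), each broadcast of $m'\in M\setminus\{m\}$ is a deviation; thus $p^-$ deviates at every $s^j$, and unilaterality pins $p^-$ as the unique deviator at each of these $n^C-1$ stages. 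In particular, no other player on the circle deviates at any stage $s^j$.

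\textbf{Next}, because $p^{--}$ is not deviating at $s^1$, he correctly observes $p^-$'s broadcast of $m'$ together with the true authentication key $x^{s^1}_{p^-}$---an independent uniform draw on $[0,1]$ known only to $p^-$'s two circle neighbors $p^{--}$ and $p$. The detection rule (first bullet if $p^{--}$ knows $m$, second bullet otherwise) then requires $p^{--}$ to initiate the true triplet $(p^-,s^1,x^{s^1}_{p^-})$. Enumerate the nodes on the ``long'' arc of the circle from the initiator to the decoder as $q_0=p^{--},q_1,\ldots,q_{n^C-3}=p^+,q_{n^C-2}=p$, a chain of $n^C-2$ hops. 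My plan is to prove by induction on $k\in\{1,\ldots,n^C-2\}$ that $q_k$ has received the triplet by stage $s^{k+1}$; taking $k=n^C-2$ yields that $p$ receives the triplet by $s^{n^C-1}$, directly contradicting the decoding hypothesis.

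\textbf{The inductive step is the main obstacle.} Assume $q_{k-1}$ has received the triplet by stage $s^k$. By the auto-correction rule---applied to forwards, and in the case $k=1$ to the initial broadcast of the triplet---$q_{k-1}$ is required to re-broadcast at the first stage after $s^k$ at which he is not himself deviating. Since $p^-$ is pinned as the unique deviator at every $s^j$ and $q_{k-1}\neq p^-$, node $q_{k-1}$ is necessarily free at stage $s^{k+1}$ and therefore broadcasts the triplet to $q_k$ by $s^{k+1}$ at the latest. The calibration of the block length $2n^C-3$ against the threshold $n^C-1$ leaves the adversary with only $n^C-2$ non-$s^j$ ``free'' stages in the block, which is never enough to stall the chain past $s^{n^C-1}$. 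To finish, note that since $x^{s^1}_{p^-}$ is a uniform draw on $[0,1]$, any triplet $(p^-,s^1,y')$ forged by the adversary has $y'=x^{s^1}_{p^-}$ with probability zero; hence the matching triplet $p$ receives from $p^+$ must be the genuine one propagated from $p^{--}$, giving the desired contradiction.
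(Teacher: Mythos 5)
Your argument is correct and follows essentially the same route as the paper's proof: pin $p^-$ as the unique deviator at the $n^C-1$ validation stages, so that every other circle player follows the protocol there, and then chain the genuine triplet $(p^-,s^1,x_{p^-}^{s^1})$ along the long arc so that it reaches $p$ by stage $s^{n^C-1}$, contradicting the decoding condition. Your explicit induction on the chain, and your induction on blocks in place of the paper's ``first player to learn $m'$'' normalization, are just more formal renderings of the same steps; the closing probability-zero remark about forged keys is superfluous for this lemma, since it is enough that some matching triplet arrives in time.
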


\begin{proof}[ Proof of  Lemma \ref{lem1}] By contradiction, assume that player $p \in \mathcal{P}$ learns $m' \in M \setminus \{m\}$ at the end of some block $B_b$, $b \ge 1$.  Without loss of generality, assume that $p$ is the first player to learn $m'$ on the path from $S$ to $R$ where $p$ lies. 

For player $p$ to learn $m'$, during the block $B_b$, it must be that player $p$ has received a sequence of $n^C-1$ grand messages from his predecessor $p^-$, say at stages $s^1,\dots,s^{n^C-1}$ with $t_b \le s^1 < s^2 < \dots < s^{n^C-1} \le t_b+{2n^C-1}$, such that (i) the first element of each of the $n^C-1$ grand messages is $m'$ and (ii) player $p$ did not receive from $p^+$ the triplet $(p^-, s^1, x_{p^-}^{s^1})$ on or before stage $s^{n^C-1}$, where $x_{p^-}^{s^1}$ matches the value of the authentication key received  from $p^-$ at stage $s^1$.

Since we assume that $m' \ne m$, it must be that that $p^-$ is deviating at all stages $s^1, \dots,s^{n^C-1}$, as we assume that $p$ is the first player to learn $m'$. Therefore, since we consider at most one deviation at each stage, all players in $\mathcal{P} \setminus \{p^-\}$ are playing according to $\sigma$ at all stages $s^1, \dots, s^{n^C-1}$. It follows that  player $p^{--}$, the predecessor of $p^-$, broadcasts the triplet $(p^-, s^1, x_{p^-}^{s^1})$ to $p^-$ and $p^{---}$ at stage $s^2$ at the latest, that player $p^{---}$ broadcasts it at stage $s^3$ at the latest, etc.\footnote{Notice that it is possible for $p^{--}$ to broadcast the message $(p^-, s^1, x_{p^-}^{s^1})$ before stage $s^2$. For instance it is possible to have a stage $s^{1'}$, with $s^1 < s^{1'} < s^2$, such that (i) $p^-$ is not deviating at stage $s^{1'}$ and sends either $m$ or $m_0$ depending on if he knows $m$ or not, and (ii) $p^{--}$ is deviating at stage $s^{1'}$ by not transmitting $(p^-, s^1, x_{p^-}^{s^1})$ to $p^-$ and $p^{--}$.}

 Since there are $n^C-2$ nodes other than $p^-$ and $p$ on the circle, player $p^+$ broadcasts the triplet $(p^-, s^1, x_{p^-}^{s^1})$ to players $p$ and $p^{++}$ at stage $s^{n^C-1}$ at the latest. Thus, player $p$ does not validate $m'$ with that sequence of grand messages.
 
 Since this is true for any such sequences, player $p$ does not learn $m'$ at block $B_b$.
\end{proof}

Lemma \ref{lem1} states that no player on the circle learns an incorrect message. The next Lemma states that at least one new player learns the correct message at the end of each block, which guarantees that the receiver learns the correct message at the latest after $1+(n^C-3)(2n^C-3)$ stages.

\begin{lemma}\label{lem2}Let $m \in M$ be the message broadcasted by the sender to $i_1$ and $j_1$ at stage $t=1$.
Suppose that all intermediaries $i_1$ to $i_k$  and $j_1$ to $j_{k'}$ know the message $m$ at the beginning of the block $B_b$.  If at most one player deviates from the protocol at each stage, then either intermediary $i_{k+1}$ or intermediary $j_{k'+1}$ is learning $m$ at the end of the block $B_b$ with probability one.\end{lemma}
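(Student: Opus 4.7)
The plan is a proof by contradiction. Suppose that under some $\sigma'\in\Sigma(\sigma)$ with at most one deviator per stage, neither $i_{k+1}$ nor $j_{k'+1}$ learns $m$ at the end of block $B_b$; since $B_b$ has $2n^C-3$ stages, there are at most $2n^C-3$ deviations in the block. I aim to show that simultaneously blocking both targets requires strictly more, yielding the contradiction.

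I would first unpack the decoding rule. For a target $p\in\{i_{k+1},j_{k'+1}\}$ with circle-predecessor $p^-$ and successor $p^+$, failing to learn $m$ means that for every collection $s^1<\cdots<s^{n^C-1}$ of $n^C-1$ stages in $B_b$ at which $p$ received $m$ from $p^-$, a matching-key accusation $(p^-,s^1,x_{p^-}^{s^1})$ arrived at $p$ from $p^+$ by stage $s^{n^C-1}$. Because the key $x_{p^-}^{s}$ is broadcast only to $p^-$'s two circle-neighbors $p$ and $p^{--}$, every matching-key triplet circulating along the circle must have been introduced by $p^{--}$, either honestly (after an inconsistent $p^-$-broadcast at timestamp $s^1$, itself a $p^-$-deviation) or via a $p^{--}$-deviation. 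The central timing observation is that the long arc from $p^{--}$ to $p$ through $p^{++}$ comprises $n^C-2$ edges, so even initiated at the earliest possible moment $s^1+1$, a matching-key accusation reaches $p$ only at stage $s^1+n^C-1$, strictly later than $s^1+n^C-2=s^{n^C-1}$ for a consecutive received-$m$ sequence. Hence the adversary is forced to create gaps in $p$'s received-$m$ stream (costing $p^-$-deviations) to widen the window, and to initiate a distinct false accusation for each new candidate $s^1$ the decoder may exploit (costing $p^{--}$-deviations, since only one triplet about its neighbor $p^-$ fits in a single $p^{--}$-message). A balancing argument between gaps and accusations then yields a lower bound of $n^C-1$ deviations charged to $\{p^-,p^{--}\}$ for blocking $p$.

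In the generic case $k\geq 2$ and $k'\geq 2$, the four attack-critical players $i_k,i_{k-1},j_{k'},j_{k'-1}$ are pairwise distinct, so summing the two lower bounds gives $2(n^C-1)=2n^C-2>2n^C-3$ and the contradiction is immediate. The main obstacle, and where most of the proof's labor goes, is the boundary case $k=k'=1$ in which $i_{k-1}=j_{k'-1}=S$ is shared between the two attacks: a single $S$-deviation can in principle carry one triplet about $i_1$ and one about $j_1$, halving $S$'s contribution. This is precisely the loophole closed by the protocol's cases (ii)--(iii) for $p\in\{i_1,j_1\}$ receiving triplets from $S$: whenever $S$ broadcasts accusations against both $i_1$ and $j_1$ with the same timestamp $s$, the receiving intermediary forwards the accusation with an independent uniform-on-$[0,1]$ key, which matches the true authentication key with probability zero. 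Hence $S$ can only genuinely share deviations across accusations with distinct timestamps, and combining this with the unilateral-deviation constraint --- which forbids two players from deviating at the same stage and so forces a careful interleaving of $i_1$'s gaps, $j_1$'s gaps, and $S$'s accusation-sending stages --- a detailed case analysis shows that the total deviation count still exceeds $2n^C-3$ with probability one. This almost-sure conclusion is exactly why the lemma's statement reads ``with probability one,'' accounting for the probability-zero event that a random key in case (iii) happens to coincide with the true one.
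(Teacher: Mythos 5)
Your proposal follows essentially the same route as the paper's proof: argue by contradiction, charge at least $n^C-1$ deviations within the block to the pair $\{p^-,p^{--}\}$ (the paper's $\{i_k,i_{k-1}\}$, resp.\ $\{j_{k'},j_{k'-1}\}$) for each blocked target, and conclude that the total exceeds the budget of $2n^C-3$ deviations available in a block. Two remarks. First, your timing claim is off by one: a triplet with timestamp $s^1$ introduced by $p^{--}$ at stage $s^1+1$ traverses the long arc in $n^C-2$ broadcasts and thus reaches $p$ from $p^+$ at stage $s^1+n^C-2=s^{n^C-1}$, i.e., \emph{exactly} at the deadline, not strictly after it --- if it arrived strictly later, the honest accusation in Lemma \ref{lem1} would also miss the deadline and the protocol would not be reliable at all. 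So the adversary is not forced to create gaps for timing reasons; the correct accounting is the paper's: there are $n^C-1-|S^{p^-}|$ consecutive received-$m$ windows with distinct starting stages, each must be killed by a distinct matching-key triplet about $p^-$, and $p^{--}$ can introduce at most one such triplet per (deviating) stage, whence $|S^{p^-}|+|S^{p^{--}}|\geq n^C-1$. This local slip does not affect the bound you assert. Second, your explicit isolation of the case $k=k'=1$, where $i_{k-1}=j_{k'-1}=S$ and the two charges land on the same player, is a genuine point that the paper's proof passes over (it simply asserts that the four deviation sets are pairwise disjoint); identifying rules (ii)--(iii) as the device that neutralizes same-timestamp double accusations is the right idea, and the missing step you defer to ``a detailed case analysis'' is precisely the observation that the exact-arrival timing forces the accusation about starting stage $s^1$ to be introduced at stage $s^1+1$, so that $S$ cannot stagger the two accusations to give them distinct timestamps. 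Modulo that unexecuted step and the off-by-one, the argument is the paper's.
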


To be more precise, Lemma \ref{lem2} states that for all $\sigma' \in \Sigma^*(\sigma)$, the subset of histories at which either intermediary $i_{k+1}$ or intermediary $j_{k'+1}$ is learning $m$ at the end of the block $B_b$ has probability one according to $\mathbb{P}_{\sigma'}$.

\begin{proof}[ Proof of Lemma \ref{lem2}] Given a finite set $M$, we denote by $|M|$ its cardinality. To ease notation, let $i:=i_k$ and $j:=i_{k'}$. We want to prove that either $i^+$ or $j^+$ learns the message at the end of the block $B_b=\{t_b,\dots,t_{b}+2n^C-4\}$. The proof is by contradiction. So, assume neither $i^+$ nor $j^+$ learns the message at the end of the block.

Fix a strategy profile $\sigma' \in \Sigma^*(\sigma)$.  Denote by $S^i$ the stages where player $i$ is deviating from $\sigma$, by $S^{i^-}$ the stages where player $i^{-}$ is deviating, by $S^j$ the stages where player $j$ is deviating and by $S^{j^-}$ the stages where player $j^{-}$ is deviating. From the definition of $\Sigma^*(\sigma)$,  the sets $S^i$, $S^{i^-}$, $S^j$ and $S^{j^-}$ are pairwise disjoints. In particular,
\begin{align}\label{toto1}
|S^i|+|S^{i^-}|+|S^j|+|S^{j^-}| \leq | S^i \cup S^{i^-} \cup S^{j} \cup S^{j^-} | \leq 2n^C-3.
\end{align}

Throughout, for any subset $S$ of $B_b$, we write $\overline{S}$ for its complement in $B_b$. By definition, at all stages in $\overline{S^i}$, player $i$ follows $\sigma_i$ and deviates at all others. Let $\overline{S^i}:=\{\overline{s}_1,\dots,\overline{s}_{\ell}, \dots, \overline{s}_{\ell^*_i}\}$, with $\overline{s}_{\ell}< \overline{s}_{\ell+1}$ for all $\ell$. Notice that $\ell^*_i=(2 n^C-3)-|S^{i}|$. 
\medskip

Assume that $\ell^*_i \geq n^C-1$. Since player $i$ follows the protocol at all stages in $\overline{S^i}$, player $i^+$ observes at least one sequence of messages such that $m$ is broadcasted $n^C-1$ times by player $i$. Consider all sequences $(\overline{s}_{\ell_1},\dots, \overline{s}_{\ell_{n^C-1}})$ of distinct elements of $\overline{S^i}$ such that all sequences have $n^C-1$ consecutive elements, that is, if  $\overline{s}_\ell$ and $\overline{s}_{\ell'}$ are elements of the sequence, so are all $\overline{s}_{\ell{''}}$ satisfying $\overline{s}_{\ell} <\overline{s}_{\ell{''}} < \overline{s}_{\ell'}$. By construction, there are $(\ell^*_i+1)-(n^C-1)=n^C-|S^i|-1$ such sequences. All these sequences have different starting stages and, therefore, different ending stages. Recall that player $i$ broadcasts $m$ at all stages of these sequences.\medskip 

Fix the sequence $(\overline{s}_{\ell_1},\dots, \overline{s}_{\ell_{n^C-1}})$. The protocol specifies that player $i^+$  learns  $m$ if and only if he has not received the correct authentication key $x_{i}^{s_{\ell_1}}$ from player $i^{++}$ by the stage $\overline{s}_{\ell_{n^C-1}}$. Therefore, player $i^+$ does \emph{not} learn $m$ only if  player $i^-$ broadcasts the correct authentication key at some stage $s > \overline{s}_{\ell_1}$; the other players do not know the authentication key and the probability of guessing it correctly is zero. Moreover, since the protocol requires player $i^-$ to broadcast $x_{i}^{s_{\ell_1}}$ only if player $i$ broadcasts $m' \neq m$ at stage $\overline{s}_{\ell_1}$, which he does not, player $i^-$ must be deviating. Therefore, $s \in S^{i^-}$. \medskip

Remember that player $i^-$ can broadcast at most one authentication key about player $i$ at each stage. Therefore, since there are $n^C-|S^i|-1$ such sequences,  player $i^-$ must deviate at least $n^C-|S^i|-1$ times for player $i^+$ to not learn $m$, that is, 
\begin{align}
|S^{i^-}|\geq n^C-|S^i|-1.
\end{align}
It follows that
\begin{align}\label{toto2}
|S^i|+|S^{i^-}| \geq n^C-1.
\end{align}

Assume now that $\ell^*_i < n^C-1$. We have that $|S^i|=(2n^C-3)-\ell^*_i > 2n^C-3-n^C-1=n^C-2$, hence $|S^i|\geq n^C-1$. Inequality (\ref{toto2}) is also satisfied.\\

A symmetric argument applies to the pair of player $j$ and player $j^{-}$, hence 
\begin{align}\label{toto3}
|S^j|+|S^{j^-}| \geq n^C-1,
\end{align}
since player $j^+$ does not learn $m$ either. 
Summing  Equations (\ref{toto2}) (\ref{toto3}), we obtain that
\begin{align}
|S^i|+|S^{i^-}|+|S^j|+|S^{j^-}|  \geq 2n^C-2,
\end{align}
a contradiction with Equation (\ref{toto1}). This completes the proof of Lemma \ref{lem2}. 
\end{proof}

To conclude the proof, it is enough to invoke Lemma \ref{lem1} and \ref{lem2}, which guarantees that the receiver learns almost surely the message broadcasted by the sender.

\section{Proof of Theorem \ref{theo1}}

Let $(\sigma^*,\tau^*)$ be an equilibrium of the direct communication game with equilibrium distribution $\mu$. We use  the protocol $(\boldsymbol{\mathcal{M}},\sigma)$ used to prove Theorem \ref{theo2} to prove Theorem \ref{theo1}. \medskip

The communication game is $\Gamma(\boldsymbol{\mathcal{M}},\mathcal{N})$. We now describe the strategies. If the state is $\omega$, the sender chooses a message $m$ in the support of $\sigma^*(\cdot |\omega) \in \Delta(M)$ and then follows the protocol $\sigma_S$, that is, the sender first broadcasts $m$ to the intermediaries $(i_1,j_1)$, i.e., his two immediate successors on the two disjoint paths to the receiver, and then follows $\sigma_S$ at all stages $t \geq 2$. Note that $\sigma_S$ is independent of $\omega$ at all stages $t \geq 2$. Similarly,  intermediary $i \in I$ follows the strategy $\sigma_i$. Lastly, the receiver follows $\sigma_R$ until he learns the message $m$, at which stage he takes an action in $A$ according to $\tau^*(\cdot |m) \in \Delta(A)$. \medskip 

Clearly, since the receiver learns the message $m$ at all histories consistent with unilateral deviations, no intermediary has an incentive to deviate since it would result in the same expected payoff. Similarly, the sender has no incentives to deviate since the sender selects the message $m$ according to the equilibrium strategy of the direct communication game and, conditional on broadcasting $m$ at the first stage, the receiver receives $m$ at all histories consistent with unilateral deviations, including deviations by the sender. Finally, the receiver has no incentive to deviate either. If he stops the game earlier, then his expected payoff is weakly lower as a consequence of Blackwell's theorem. Indeed, the only informative message about $\omega$ is $m$ and stopping earlier is a garbling of $m$. \medskip

\section{Sequential rationality} \label{seq-rat}
As already argued in the text, sequential rationality is guaranteed at all histories consistent with at most one intermediary deviating at each stage of the communication game. We therefore focus our attention on all other histories, i.e., histories not in $\mathcal{H}(\sigma)$.

We first consider all intermediaries $(i_1,\dots,i_K)$ and $(j_1,\dots,j_{K'})$. We treat the sender and receiver separately. 

\textbf{Rebooting strategies.} We say that player $i$ reboots his strategy at period $t$ if, from any history $h_i^t \notin \mathcal{H}_{i}(\sigma)$ onwards, he follows the protocol as if he knows that the message is $m_0$. That is, at history $h_i^t$, he broadcasts the message $m_0$, an authentication key $x_i^t$, and random triplets $(j,t_j,x_j^{t_j})$, $j\in \mathcal{N}_i$. At all subsequent histories consistent with at most one intermediary deviating from the protocol at each stage, player $i$ continues to follow the protocol. That is, player $i$ continues to broadcast $m_0$,  authentication keys and triplet $(j,t_j,x_j^{t_j})$, as specified by the protocol when a player knows a message (here, it is $m_0$). At all other histories, player $i$ reboots yet again his strategy, that is, player $i$ continues to broadcast $m_0$, authentication keys and triplets, as if the multilateral deviation hadn't taken place.

\textbf{Beliefs.} At history $h_i^t \notin \mathcal{H}_{i}(\sigma)$, player $i$ believes that all other players on the same side of the circle reboot their strategies, while players on the other side of the circle continue to follow the protocol. (Here, the sender and receiver are assumed to be on the other side of the circle.) In other words, player $i$ believes that all other players on the same side of the circle have also observed a multilateral deviation, while players on the other side have observed no deviations.

We now consider the sender. At all histories $h_S^t$, the sender continues to follow the protocol as if the observed multilateral deviations hadn't happened. However, he believes that all intermediaries reboot their strategies at period $t$, while the receiver continues to follow the protocol.

 Finally, we consider the receiver. The receiver continues to validate messages as he does in the protocol, i.e., he tests sequences of messages of length $n^C-1$ received by his two predecessors and  validates a message if he has received a sequence of $n^C-1$ identical copies of the message and has not received the correct authentication on time (see the construction of the protocol for details). To complete the construction of the strategies, we assume that if the receiver validates $m \in M$ and $m_0 \notin M$, then he plays $\tau^*(m)$. Similarly, if he validates two different messages $(m,m') \in M \times M$ or $(m_0,m_0)$ or no messages at all, he plays a best reply to his prior.  At all histories, the receiver continues  to follow the protocol as if the observed deviations hadn't happened. He believes that all intermediaries reboot their strategies, while the sender continues to follow the protocol. 
 
 \textbf{Sequential rationality.} At history $h_i^{t}\notin \mathcal{H}_{i}(\sigma)$, an intermediary expects the receiver to validate a message $m \in M$ from the other side and to validate the message $m_0$ from his side. Since the receiver takes the decision $\tau^*(m)$ when validating the messages $m \in M$ and $m_0 \notin M$, the intermediary cannot deviate profitably (as, regardless  of his play, the receiver validates $m$ from the other side). Therefore, rebooting the strategy is optimal. Similarly, since the sender expects the intermediaries $i_{K}$ and $j_{K'}$ to reboot their strategies, he expects the receiver to play $a^*$ and, therefore, cannot profitably deviate. The same applies to the receiver.

\section{Proof of Theorem \ref{theo3}}\label{prooftheo3}
The proof is constructive and relies extensively on Theorem \ref{theo2}. The main idea is to generate a jointly controlled lottery between the sender and one of the two intermediaries $i_1$ or $j_1$ to generate a recommendation, which is then reliably transmitted to the receiver. \medskip 

We start with a formal definition of jointly controlled lotteries.

\begin{definition}
Let $\mu\in \Delta(A)$. A jointly controlled lottery generating $\mu$ is a triple $(X,Y,\phi)$ such that
\begin{itemize}
\item[-] $X$ is a measurable function from a probability space $(U,\mathcal{U},\mathbb{P})$ to $([0,1],\mathbb{B}_{[0,1]})$,
\item[-] $Y$ is a measurable function from a probability space $(U,\mathcal{U},\mathbb{P})$ to $([0,1],\mathbb{B}_{[0,1]})$,
\item[-] $X$ and $Y$ are independent,
\item[-] and $\phi$ is a measurable function from $([0,1]^2,\mathbb{B}_{[0,1]^2})$ to $(A,\mathbb{B}_{A})$,
\end{itemize}
such that 
\begin{itemize}
\item[(i)] the law of $Z:=\phi(X,Y)$ is $\mu$,
\item[(ii)] for every $x\in [0,1]$, the law of $\phi(x,Y)$ is $\mu$,
\item[(iii)] and for every $y\in [0,1]$, the law of $\phi(X,y)$ is $\mu$.
\end{itemize}
\end{definition}

As explained in the main text, when $A$ is finite, any distribution $\mu$ over $A$ can be generated by a jointly controlled lottery.   The idea is to partition the interval $[0,1]$ into $|A|$ sub-intervals, where the length of the sub-interval associated with $a$ is $\mu(a)$. Let $f: [0,1] \rightarrow A$, where $f(r)=a$ if $r$ is in the sub-interval associated with $a$. Note that $f^{-1}(a)$ is a Borel set and has measure $\mu(a)$. Consider then 
 two uniform random variables $X$ and $Y$. The key observation  to make is that the sums $X+Y \mod{[0,1]}$, $x+Y \mod{[0,1]}$, and $X+y \mod{[0,1]}$ are also uniformly distributed on $[0,1]$, regardless of the values of $x$ and $y$. Therefore, if we let $\phi(x,y) =a$ if $x+y \mod{[0,1]} \in f^{-1}(a)$, then the triplet $(X,Y,\phi)$ generates $\mu$. The next proposition states that this construction generalizes to arbitrary complete and separable metric space $A$.

\begin{proposition}\label{prop:lottery}
For any $\mu\in \Delta(A)$, there exists a jointly controlled lottery $(X,Y,\phi)$ that generates $\mu$.
\end{proposition}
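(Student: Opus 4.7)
The plan is to reduce to the finite case argument sketched in the text, using two ingredients: (i) any Borel probability measure on a complete separable metric space is the pushforward of the Lebesgue measure on $[0,1]$ by some measurable function, and (ii) the sum of two independent uniforms modulo $1$ is again uniform, and stays uniform when one of the two variables is fixed.

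First I would construct a measurable $f\colon [0,1]\to A$ with $f_\ast\lambda = \mu$, where $\lambda$ denotes Lebesgue measure on $[0,1]$. Since $A$ is a complete separable metric space, the Borel isomorphism theorem provides a Borel isomorphism between $A$ and a Borel subset $B$ of $[0,1]$ (identifying at most countable spaces with subsets of $[0,1]$ trivially). Push $\mu$ forward to a probability measure $\tilde\mu$ on $[0,1]$ supported on $B$, let $F$ be its cumulative distribution function, and define the generalized inverse $F^{-1}(u):=\inf\{t\in[0,1]:F(t)\ge u\}$. Setting $f(u):=\psi^{-1}(F^{-1}(u))$, where $\psi\colon A\to B$ is the Borel isomorphism, yields a measurable map with $f_\ast\lambda=\mu$.

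Next I would take $(U,\mathcal U,\mathbb P)$ to be any probability space supporting two independent random variables $X$ and $Y$, each uniformly distributed on $[0,1]$ (for instance $U=[0,1]^2$ with Lebesgue measure and the two coordinate projections). I then define
\[\phi(x,y):= f\bigl((x+y)\bmod 1\bigr),\]
which is the composition of $f$ with the measurable map $(x,y)\mapsto (x+y)\bmod 1$, hence measurable from $([0,1]^2,\mathbb B_{[0,1]^2})$ to $(A,\mathbb B_A)$.

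It remains to verify (i), (ii) and (iii). For (i), the random variable $Z=(X+Y)\bmod 1$ is uniform on $[0,1]$: its distribution is the convolution of two uniforms on the circle group $\mathbb R/\mathbb Z$, which is Haar measure. Hence $\phi(X,Y)=f(Z)$ has law $f_\ast\lambda=\mu$. For (ii), fix $x\in[0,1]$; then $(x+Y)\bmod 1$ is a translate of a uniform on $\mathbb R/\mathbb Z$, hence still uniform on $[0,1]$, so $\phi(x,Y)$ has law $\mu$. Statement (iii) follows by symmetry in $X$ and $Y$. The only real obstacle is the construction of $f$ in the first paragraph, which is precisely where the Polish (complete separable metric) assumption on $A$ is used; everything else is the direct generalization of the finite-$A$ argument already given in the text.
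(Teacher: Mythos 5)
Your proof is correct and follows essentially the same route as the paper: construct a measurable $f\colon[0,1]\to A$ pushing Lebesgue measure forward to $\mu$, set $\phi(x,y)=f((x+y)\bmod 1)$, and use translation invariance of the uniform distribution modulo $1$. The only difference is that the paper obtains $f$ by citing the fundamental principle of simulation (Bouleau and Lepingle, 1993), whereas you rebuild it from the Borel isomorphism theorem and the quantile transform; this is the same underlying fact, and your construction only needs the cosmetic remark that $f$ may be defined arbitrarily on the Lebesgue-null set where $F^{-1}(u)\notin B$.
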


\begin{proof}[Proof of Proposition \ref{prop:lottery}]
Let $\lambda$ be the Lebesgue measure on $[0,1]$. From the fundamental principle of simulation (Theorem A.3.1, p. 38 in Bouleau and Lepingle, 1993), there exists a Borel function $f:[0,1] \rightarrow A$ such that $f$ is $\lambda$-a.e. continuous and
\[
\forall E \in \mathbb{B}_{A},\ \lambda \circ f^{-1}(E):=\lambda(f^{-1}(E))=\mu(E).\]
To complete the proof, let  $X$ and $Y$ be two random variables with uniform distribution on $[0,1]$ and define $\phi$ as
\[\phi(x,y)=f(x+y\mod_{[0,1]}),\]
for all $(x,y) \in [0,1]$. It is routine to verify that the triplet $(X,Y,\phi)$ is a jointly controlled lottery, which generates $\mu$.
\end{proof}
\medskip

We now explain how to robustly implement mediated communication on the network $\mathcal{N}$. 
Let $\tau^*: \Omega \rightarrow \Delta(A)$ be a canonical communication equilibrium of the direct communication game. From Proposition \ref{prop:lottery}, for each $\omega$, there exists a jointly controlled lottery $(X_\omega,Y_\omega,\phi_\omega)$, which generates $\tau^*(\omega)$.

As in the proof of Theorem \ref{theo2}, we let $\mathcal{P}$ be the players on the two disjoint path from the sender to the receiver. The communication game is as follows: 

\begin{description}
\item[$t=1$] The sender truthfully broadcasts the state $\omega$ to the intermediaries $i_1$ and $j_1$. 
\item[$t=2$] The sender and intermediary $i_1$ draw a random number in [0,1] each. Let $x$ (resp., $y$) the number drawn by the sender (resp., intermediary $i_1$). The sender and the intermediary $i_1$ broadcast their random number. 
\item[$t=3,\dots,2+ (2n^C-3)(n^C-3)$] The players $p \in P$ execute the protocol with $i_1$ in the role of the sender and $j_1$ in the role of the receiver and the message to be transmitted is $y$. 
\item[$t= 3+ (2n^C-3)(n^C-3)$]  The sender and the intermediaries $i_1$ and $j_1$ outputs a recommendation $a \in A$ according to $(X_\omega,Y_\omega,\phi_\omega)$, that is, the recommendation is $\phi_{\omega}(x,y)$ with $\omega$ the state broadcasted at $t=1$. The three of them truthfully broadcast the recommendation. 
\item[$t=4+ (2n^C-3)(n^C-3), \dots, 2+ 2(2n^C-3)(n^C-3)$] The players execute in parallel and independently three copies of the protocol with $S$, $i_1$ and $j_1$ in the role of the sender, respectively, and the message to be transmitted is the recommendation $a$. At the last stage, the receiver follows the recommendation made a majority of times, if any. (If there is no majority, then he chooses an arbitrary action.)
\end{description}

Since at stage $t= 3+ (2n^C-3)(n^C-3)$, at most one of the three ``senders'' can deviate, the correct recommendation is sent at least twice. It follows that if the receiver is obedient,  the receiver chooses the correct action at all histories consistent with unilateral deviations. Moreover, since the receiver observes neither $\omega$ nor $x$,  he has no additional information about the state than in the direct communication game, hence he has an incentive to be obedient.

\end{document}